\newcommand{\doi}[1]{\href{https://doi.org/#1}{\texttt{DOI}:\nolinkurl{#1}}}
\newcommand\arxiv[1]{[\href{http://arxiv.org/abs/#1}{\path{#1}}]}
\newtheorem{theorem}{Theorem}[section]
\newtheorem*{theorem*}{Theorem}
\newtheorem{lemma}[theorem]{Lemma}
\newtheorem{cor}[theorem]{Corollary}
\newtheorem{prop}[theorem]{Proposition}
\theoremstyle{definition}
\newtheorem{definition}[theorem]{Definition}
\theoremstyle{remark}
\newtheorem{remark}[theorem]{Remark}
\newcommand{\Da}{\mathcal{D}}
\newcommand{\Fa}{\mathcal{F}}
\newcommand{\Oa}{\mathcal{O}}
\newcommand{\Fs}{\mathscr{F}}
\newcommand*{\alg}[1]{\mathfrak{#1}}
\newcommand{\fg}{\ensuremath{\mathfrak{g}}}
\newcommand{\fG}{\ensuremath{\mathfrak{G}}}
\newcommand{\fh}{\ensuremath{\mathfrak{h}}}
\newcommand{\fn}{\ensuremath{\mathfrak{n}}}
\newcommand{\fp}{\ensuremath{\mathfrak{p}}}
\newcommand{\ft}{\ensuremath{\mathfrak{t}}}
\let\H\undefined
\DeclareMathOperator{\H}{H}
\DeclareMathOperator{\ad}{ad}
\DeclareMathOperator{\End}{End}
\DeclareMathOperator{\Sym}{Sym}
\DeclareMathOperator{\Der}{Der}
\DeclareMathOperator{\sgn}{sgn}
\DeclareMathOperator{\id}{\boldsymbol{1}}
\DeclareMathOperator{\MC}{MC}
\newcommand{\sfm}{\mathsf{m}}
\newcommand{\Z}{\ensuremath{\mathbbm{Z}}}
\newcommand{\C}{\ensuremath{\mathbbm{C}}}
\renewcommand{\SS}{\ensuremath{\textup{S}}}
\newcommand{\ot}{\otimes}
\newcommand{\bu}{{\bullet}}
\newcommand{\cf}{\textit{cf.}~}
\newcommand{\ie}{\textit{i.e.}~}
\newcommand{\eg}{\textit{e.g.}~}
\title[]{(Twisted) canonical supermultiplets and their resolutions as open-closed homotopy algebras}
\author{Simon Jonsson}
\address{Department of Physics, Astronomy and Mathematics\\ University of Hertfordshire \\Hatfield, AL10 9AB, United Kingdom}
\email{\texttt{\href{mailto:d.jonsson@herts.ac.uk}{d.jonsson@herts.ac.uk}},\orcidlink{0009-0001-7155-8496}}
\date{\today}
\begin{document}
 \thispagestyle{empty}
\begin{abstract}
  We argue that some supersymmetric multiplets can naturally be equipped with the structure of an open-closed homotopy algebra. This structure is readily described through the pure spinor superfield formalism, which in particular associates a ``canonical multiplet'' for each choice of supersymmetry algebra. We study the open-closed homotopy algebra associated to (twists of) (resolutions of) the canonical multiplet, and show that it fits into a span of open-closed homotopy algebras, extending results of Cederwall {\it et al.} \cite{cederwall2023canonical}. 
\end{abstract}
\maketitle

\setcounter{tocdepth}{2}
\tableofcontents
\newpage
\section{Introduction}
A well-known challenge in the study of supersymmetric theories is finding viable off-shell representations of the spacetime supersymmetry: in some descriptions, the symmetry algebra is only represented on shell, \ie after imposing the equations of motion. This difficulty motivated the advancement of the Batalin-Vilkovisky (BV) formalism \cite{Batalin:1977pb,Batalin:1981jr,Batalin:1984jr,Batalin:1984ss,Batalin:1985qj}, where the introduction of auxiliary fields facilitates an off-shell formulation. In the language of homotopy theory this means one moves to a different (in some suitable sense equivalent) model of the underlying theory.

Relatedly, finding/constructing supersymmetric theories (or even just supermultiplets) using standard techniques is usually a case-by-case endeavour reliant on insight. Therefore, developing systematic and uniform methods for constructing supersymmetric theories is desirable. 

The pure spinor superfield formalism (see for example \cite{Cederwallspinorial,Cederwall2013PureOverview,jonssonthesis, perspectivesonpurespin,equivalence} and references therein) tackles both these issues simultaneously, that is, it provides a systematic way to construct supersymmetric representations, where the action of supersymmetry is both strict and manifest. The formalism can naively be seen as a generalisation of the above procedure of introducing auxiliary fields. To be a bit more precise, given a supersymmetry algebra $\fp$, the formalism is a functor $A^\bu(-)$ from the category of equivariant modules over (a derived replacement of) the ring $\Oa_Y$ of regular functions on the Maurer--Cartan variety $Y$ of $\fp$ to (a certain subcategory of) the category of \emph{$\fp$-multiplets}, characterised by homotopy $\fp$-representations satisfying some ``locality'' axioms. In fact, it was shown in \cite{equivalence} that this functor defines an equivalence of dg-categories, so that \emph{any} supermultiplet can be obtained through the pure spinor superfield formalism. Moreover, for any $\fp$-multiplet obtained through the formalism, one can define an appropriate notion of a ``component-field multiplet''.

As emphasised in \cite{cederwall2023canonical}, for each supersymmetry algebra $\fp$, there exists a ``canonical multiplet''\footnote{This multiplet was called the ``tautological multiplet'' in \cite{saberi2021twisting}.} associated to it: the multiplet $A^\bu(\Oa_Y)$ associated to $\Oa_Y$ itself. 
Furthermore, the fact that $\Oa_Y$ is a ring endows the canonical multiplet with the structure of a commutative dg algebra, or in other words an $A_\infty$-algebra which is strict and commutative. By homotopy transfer, the commutative structure on the canonical multiplet gives rise to an $A_\infty$-algebra\footnote{We should mention here that the transferred structure will in particular be $C_\infty$, we will not make this distinction in the sequel.} structure on the component fields. In some examples, this structure corresponds to the (colour-stripped\footnote{In the sense of \cite{Jurco:2019yfd,Borsten:2021hua}.}) homotopy algebra encoding the interactions of the underlying BV theory \cite{perspectivesonpurespin}.

Now, the commutative structure on the canonical multiplet is not freely generated, simply because $\Oa_Y$ is not freely generated.\footnote{As $\Oa_Y$ is the ring of functions on the Maurer--Cartan variety of $\fp$ it can be seen as the quotient of a polynomial ring by the (quadratic) ideal defining the Maurer--Cartan equation.} It is interesting to ask whether it is possible to resolve the canonical multiplet by a freely generated commutative  dg algebra. This is indeed possible, and a procedure for how was described in \cite{cederwall2023canonical}.\footnote{See also \cite{Chesterman:2002ey,Berkovits2005TheSpinors,Aisaka2008PureSpectrum, Movshev2004OnTheories, movshev2006algebraic, Movshev:2009ba,MovshevBar, jonssonthesis, Cederwall2015SuperalgebrasFunctions} for related work.} There, the authors proceed by considering the \emph{Tate resolution} \cite{Tate} of $\Oa_Y$. The resolution is a freely generated commutative dg algebra quasi-isomorphic to $\Oa_Y$. Moreover, the resolution corresponds to a $\fp$-multiplet $\widetilde{A^\bu}$ which is also endowed with a (freely generated) commutative algebra structure. One of the main results of \cite{cederwall2023canonical} was the construction of a span of quasi-isomorphisms of $A_\infty$-algebras.

\begin{theorem*}[\cite{cederwall2023canonical}]
There is a span
    \begin{equation}
        \begin{tikzcd}[row sep = 1 ex]
          \label{eq:roofuntwistintro}
          & \widetilde{A^\bu} \ar[rd] \ar[ld] &  \\
                A^\bu(\Oa_Y) & & C^\bu(\fn)
        \end{tikzcd},
      \end{equation}
of quasi-isomorphisms of $A_\infty$-algebras. 
\end{theorem*}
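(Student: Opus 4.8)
The plan is to produce both legs of the span from a single source—the Tate construction, which by definition supplies a quasi-isomorphism of commutative dg algebras—once it is pushed through the functor $A^\bu(-)$ on one side and compared against a Koszul-dual Chevalley--Eilenberg model on the other. Write $\widetilde{\Oa_Y}$ for the freely generated commutative dg algebra produced by the Tate resolution, so that $\widetilde{A^\bu}=A^\bu(\widetilde{\Oa_Y})$, and let $\epsilon\colon\widetilde{\Oa_Y}\to\Oa_Y$ denote the augmentation, which is a quasi-isomorphism of cdgas. The left leg is then essentially formal. Since $A^\bu(-)$ is a dg-functor—indeed an equivalence of dg-categories by the cited equivalence result—it preserves quasi-isomorphisms, and since it carries the commutative product on a cdga to the multiplicative structure on the associated multiplet, applying it to $\epsilon$ yields a quasi-isomorphism of $A_\infty$-algebras
\begin{equation}
  \widetilde{A^\bu}=A^\bu(\widetilde{\Oa_Y})\ \xrightarrow{\ A^\bu(\epsilon)\ }\ A^\bu(\Oa_Y).
\end{equation}
Because $\epsilon$ is a strict map of cdgas, no homotopy-coherence issues arise on this side.

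For the right leg I would exploit the freeness of $\widetilde{A^\bu}$ itself. A freely generated cdga equipped with a differential is precisely the Chevalley--Eilenberg cochain algebra $C^\bu(\fn_\infty)$ of an $L_\infty$-algebra $\fn_\infty$, whose underlying graded space is the shifted linear dual of the generators and whose brackets are the Taylor components of the differential. The plan is to identify $\fn_\infty$, up to quasi-isomorphism, with the strict Lie algebra $\fn$ Koszul-dual to the quadratic ring $\Oa_Y$. Equivalently, one passes to the minimal model of $\widetilde{A^\bu}$, exhibiting $C^\bu(\fn)$ as a retract along a projection that discards an acyclic free sub-cdga, and checks that the residual generators with their induced differential reproduce exactly the Chevalley--Eilenberg complex of $\fn$. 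Dualizing the underlying $L_\infty$-quasi-isomorphism $\fn\to\fn_\infty$ then gives the desired quasi-isomorphism of cdgas $\widetilde{A^\bu}\to C^\bu(\fn)$. Note that this mechanism is genuinely different from the left leg—one contraction collapses the Tate generators at the level of the ring, the other collapses the redundant free generators of the multiplet—which is exactly why the comparison assembles into a span rather than a single zigzag.

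The difficulty is concentrated in the right leg, in two linked points. First, formality: one must show that the $L_\infty$-algebra read off from the Tate differential is quasi-isomorphic to the \emph{strict} Lie algebra $\fn$, and this is where Koszulness of $\Oa_Y$—and its validity for the supersymmetry algebras $\fp$ under consideration—does the real work. Second, and more delicate than on the left, one must track the \emph{multiplicative} data and not merely the underlying cochain complexes, so that the projection onto $C^\bu(\fn)$ intertwines the homotopy-transferred products and is a quasi-isomorphism of $A_\infty$-algebras rather than only of complexes. Verifying that the minimal-model projection is compatible with the multiplication up to coherent homotopy is the crux; the left leg sidesteps this entirely by being a strict cdga map, so the burden falls on arranging the Koszul-dual identification on the right to respect the products.
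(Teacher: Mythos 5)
Your left leg is essentially correct and matches the paper's construction in substance: the Tate augmentation $\epsilon\colon C^\bu(\tilde{\ft})\to\Oa_Y$ is a strict quasi-isomorphism of cdgas and of $C^\bu(\ft)$-modules, so $A^\bu(\epsilon)=\id_{\C[x,\theta]}\ot\epsilon$ is a strict cdga quasi-isomorphism, hence in particular a quasi-isomorphism of $A_\infty$-algebras. (The paper packages the same fact as a strong retract fed into the homological perturbation lemma, using a weight grading to kill all higher corrections; that extra data is needed for later transfer arguments but not for the bare statement.)

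The right leg, however, contains a genuine gap. In this theorem $\fn$ is \emph{not} ``the strict Lie algebra Koszul-dual to the quadratic ring $\Oa_Y$'': by definition it is the minimal $L_\infty$-algebra --- in general with nontrivial higher brackets --- spanned by the shifted duals of the stage-$\geq 2$ Tate generators, i.e.\ the ideal fitting into $0\to\fn\to\tilde{\ft}\to\ft\to 0$, and $C^\bu(\fn)$ is its Chevalley--Eilenberg algebra. No Koszulness of $\Oa_Y$ is assumed, and for the nilpotence-variety rings in question it typically fails; the entire point of using the Tate resolution rather than a Koszul resolution is to cover that case. So the ``formality'' step you isolate as carrying the real work --- identifying the $L_\infty$-algebra read off from the Tate differential with a \emph{strict} $\fn$ --- is both unnecessary and, as stated, false in general; pursuing it would sink the proof outside the Koszul case. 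The actual argument contracts the two acyclic pairs $(\lambda,\theta)$ and $(v,x)$: the tensor trick gives a strong \emph{multiplicative} retract of $\bigl(\widetilde{A^\bu},\lambda\pdv{}{\theta}+v\pdv{}{x}\bigr)$ onto $(C^\bu(\fn),0)$, and perturbing by $-\lambda\Gamma\theta\pdv{}{x}+d_{\tilde{\ft}}$ induces, via the homological perturbation lemma, precisely the Chevalley--Eilenberg differential $d_\fn$. Since $I$ and $P$ are strict algebra maps and $H$ is an $(IP,\id)$-derivation, and these properties survive the perturbation, the resulting $P'$ is a \emph{strict} cdga quasi-isomorphism; the multiplicative coherence you flag as ``the crux'' is therefore automatic, and the asymmetry you describe between the two legs is not actually present.
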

(The definition of $C^\bu(\fn)$ will be recalled in \cref{sec:tate} below.) 
      \bigskip

      Given that we now have a class of $\fp$-multiplets which also carry a multiplicative structure, the canonical multiplet $A^\bu(\Oa_Y)$ being an important example, it is natural ask about compatibility between these two structures. This is the topic of the present paper. 
     
Consider an ordinary Lie algebra $\fg$, and a Lie algebra representation $\rho:\fg\to \End(M)$ of $\fg$ on a vector space $M$. If $M$ is equipped with the structure of an associative algebra, a suitable notion for $M$ to be an algebra over $\fg$ is to say that the morphism $\rho$ factors through
\begin{equation}
  \begin{tikzcd}
    \fg \arrow[r,"{\rho}"] \arrow[dr]& \End(M)\\
    & \Der(M)\arrow[u,hookrightarrow]
  \end{tikzcd},
\end{equation}
where $\Der(M)$ is the Lie algebra of derivations of the multiplication on $M$. An example is to consider the action of vector fields on the algebra of functions on a manifold.

 The appropriate homotopy generalisation of the notion of an associative algebra over a Lie algebra was introduced by Kajiura and Stasheff in \cite{Kajiura2006}, where they define the notion of an $A_\infty$-algebra over an $L_\infty$-algebra. They furthermore observe that this notion is equivalent to what they aptly term an open-closed homotopy algebra (motivated by the work of Zwiebach \cite{ZWIEBACH1998193} on open-closed string field theory). Open-closed homotopy algebras moreover provide a ``good'' homotopy theory, in that homotopy transfer is possible \cite{Kajiura2006,openclosedkoszul}.

 In this paper we observe that the canonical multiplet, with its commutative structure, is an example of an $A_\infty$-algebra over the (strict) $L_\infty$-algebra $\fp$. Homotopy transfer then implies that the component-field multiplet is also an $A_\infty$-algebra over $\fp$.

 Using the framework of $A_\infty$-algebras over $L_\infty$-algebras we extend the above theorem of \cite{cederwall2023canonical}. 
\begin{theorem*}[\Cref{thm:rooftopenclosed}]
There is a span
    \begin{equation}
        \begin{tikzcd}[row sep = 1 ex]
          \label{eq:roofopenclosedintro}
          & \widetilde{A^\bu} \ar[rd] \ar[ld] &  \\
                A^\bu(\Oa_Y) & & C^\bu(\fn)
        \end{tikzcd},
      \end{equation}
of quasi-isomorphisms of $A_\infty$-algebras over the (strict) $L_\infty$-algebra $\fp$. 
\end{theorem*}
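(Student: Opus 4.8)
The plan is to exhibit each vertex of the span as an $A_\infty$-algebra over $\fp$ and each leg as a morphism of such, and then to inherit the quasi-isomorphism property from the underlying $A_\infty$-span of \cite{cederwall2023canonical}. I shall use the characterisation of Kajiura and Stasheff \cite{Kajiura2006}: an $A_\infty$-algebra over the (strict) $L_\infty$-algebra $\fp$ — equivalently an open-closed homotopy algebra (OCHA) with closed sector $\fp$ — on an $A_\infty$-algebra $A$ is the same datum as an $L_\infty$-morphism from $\fp$ into the (Hochschild-type) $L_\infty$-algebra of coderivations controlling the $A_\infty$-structure of $A$. When $A$ is a strict commutative dg algebra on which $\fp$ acts by strict derivations, this morphism reduces to a dg-Lie map $\fp \to \Der(A)$, and the only non-trivial structure maps are the products of $A$ together with the action of a single element of $\fp$ on a single element of $A$, all higher mixed operations vanishing.

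First I would treat the two commutative vertices $A^\bu(\Oa_Y)$ and $\widetilde{A^\bu}$, both of which are strict commutative dg algebras; here it suffices to produce a dg-Lie map into derivations. For the canonical multiplet this is essentially tautological: $\Oa_Y$ is the ring of functions on the Maurer--Cartan variety $Y$ of $\fp$, hence a commutative algebra object in $\fp$-representations whose multiplication is $\fp$-equivariant — that is, $\fp$ acts by derivations. Since the functor $A^\bu(-)$ carries the $\fp$-equivariant commutative algebra $\Oa_Y$ to the canonical multiplet with its commutative product, the latter inherits a strict derivation action and is a strict $A_\infty$-algebra over $\fp$. For $\widetilde{A^\bu}$ I would verify that the Tate resolution of \cite{cederwall2023canonical} can be carried out $\fp$-equivariantly: at each stage one adjoins free generators spanning an $\fp$-subrepresentation chosen to kill the (automatically $\fp$-equivariant) cycles, so that both the Tate differential and the free commutative product are $\fp$-equivariant. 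Thus $\fp$ again acts by derivations and $\widetilde{A^\bu}$ is a strict $A_\infty$-algebra over $\fp$; moreover the augmentation $\widetilde{A^\bu} \to A^\bu(\Oa_Y)$, being an $\fp$-equivariant quasi-isomorphism of commutative dg algebras, is at once a strict morphism of $A_\infty$-algebras over $\fp$.

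For the right-hand leg $\widetilde{A^\bu} \to C^\bu(\fn)$ I would appeal to homotopy transfer in the open-closed setting. The $A_\infty$-span of \cite{cederwall2023canonical} is produced from a contraction of the open complex underlying $\widetilde{A^\bu}$ onto $C^\bu(\fn)$ (recalled in \cref{sec:tate}). Since homotopy transfer is available for open-closed homotopy algebras \cite{Kajiura2006,openclosedkoszul}, this same contraction transfers the full OCHA structure of $\widetilde{A^\bu}$ — not merely the open products, but also the operations encoding the $\fp$-action — to $C^\bu(\fn)$, along with an OCHA quasi-isomorphism extending the given $A_\infty$-map. The transferred datum is a generally non-strict $L_\infty$-morphism $\fp \to \mathrm{Hoch}(C^\bu(\fn))$, whose higher components are the usual tree sums built from the $\fp$-action, the products, and the contracting homotopy. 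Finally, both legs induce quasi-isomorphisms on the underlying open complexes by \cite{cederwall2023canonical} and are the identity on the closed sector $\fp$, so they are quasi-isomorphisms of $A_\infty$-algebras over $\fp$, as claimed.

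The main obstacle I anticipate lies in this last step: one must ensure that the contraction realising the $A_\infty$-structure on $C^\bu(\fn)$ is compatible with the $\fp$-module structure, so that the transferred mixed operations are well defined and satisfy the open-closed relations (equivalently, that $\fp \to \mathrm{Hoch}(C^\bu(\fn))$ is a genuine $L_\infty$-morphism). If the contracting homotopy can be chosen $\fp$-equivariantly then the transferred action stays strict and the check is immediate; otherwise one leans on the general open-closed homotopy transfer theorem, which requires only a contraction of the open complex and keeps the closed sector fixed. A secondary point to handle with care is the $\fp$-equivariance of the Tate resolution — that free generators really can be adjoined in $\fp$-representations at every stage — which is where the non-reductive nature of the supersymmetry algebra $\fp$ enters, though equivariance of the cycles being resolved should make this routine.
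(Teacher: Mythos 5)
Your proposal is correct and its overall architecture matches the paper's: both cdga vertices carry strict open-closed structures because $\fp$ acts by strict derivations, the left leg is a strict equivariant cdga quasi-isomorphism, and the right leg is produced by open-closed homotopy transfer along the contraction of the open sector onto $C^\bu(\fn)$ with the closed sector $\fp$ held fixed. The one place where you take a genuinely lighter route is the left leg: you simply observe that the $\fp$-equivariant cdga quasi-isomorphism $\widetilde{A^\bu}\to A^\bu(\Oa_Y)$ is automatically a strict morphism of $A_\infty$-algebras over $\fp$, whereas the paper (\cref{prop:AOYopenclosed} with $Q=0$, yielding \cref{cor:AOYopenclosed}) runs the full homological perturbation machinery along the retract induced by the Tate resolution and uses an auxiliary weight grading ${\sf w}_2$ to check that the transferred open-closed structure on $A^\bu(\Oa_Y)$ coincides with the given one and that no higher operations $n'_{p,q}$ appear. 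For the untwisted statement your shortcut suffices; what the paper's heavier argument buys is the quasi-inverse direction of the retract as an open-closed quasi-isomorphism and, more importantly, a proof that goes through verbatim for arbitrary twists $Q$, which is needed for \cref{thm:maintwisted}. Two small caveats on your side: the equivariance you need for the Tate resolution is only $\Der(\ft)$-equivariance (the supertranslations act solely through the $\C[x,\theta]$ factor, so there is nothing to check there), and this equivariance is already established in \cite{cederwall2023canonical}; and on the right leg the paper, like your fallback option, does not attempt to make the homotopy $\fp$-equivariant — it accepts that the transferred $\fp$-action on $C^\bu(\fn)$ is a homotopy action with possible higher mixed operations, which is exactly what the open-closed transfer theorem delivers.
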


Finally, since the pure spinor formalism provides a natural framework for twisting  (see for example \cite{saberi2021twisting,Hahner:2023kts,Jonsson:2024uyr}). It is thus natural to ask how the span \eqref{eq:roofopenclosedintro} behaves under twisting. As we shall see, the construction of \eqref{eq:roofopenclosedintro}, is easily extended to the twisted setting. Our main result is the following theorem.
\begin{theorem*}[\Cref{thm:maintwisted}]
There is a span
    \begin{equation}
        \begin{tikzcd}[row sep = 1 ex]
          \label{eq:rooftwistopenclosedintro}
          & \widetilde{A^\bu}_Q \ar[rd] \ar[ld] &  \\
                A^\bu(\Oa_Y)_Q & & C^\bu(\fn_Q)
        \end{tikzcd},
      \end{equation}
of quasi-isomorphisms of $A_\infty$-algebras over the (strict) $L_\infty$-algebra $\fp_Q$.
\end{theorem*}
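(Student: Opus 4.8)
The plan is to exhibit the twisted span as the image of the untwisted span of \Cref{thm:rooftopenclosed} under a single twisting endofunctor, and then to identify the resulting objects with those named in \Cref{thm:maintwisted}. Fix the twisting datum to be a Maurer--Cartan element $Q\in\fp$, an odd element satisfying $[Q,Q]=0$; then $\ad_Q=[Q,-]$ squares to zero and $\fp_Q=(\fp,\ad_Q,[-,-])$ is again a strict $L_\infty$-algebra. Every object $B$ appearing in the span of \Cref{thm:rooftopenclosed} is an $A_\infty$-algebra over $\fp$, hence carries open-closed operations $n_{k,l}$ with $k$ inputs from $\fp$ and $l$ inputs from $B$, and in particular an action of $Q$. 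One twists these operations, and the components of each structure morphism, by summing over insertions of $Q$,
\[
  n^Q_{k,l}=\sum_{j\ge0}\tfrac{1}{j!}\,n_{k+j,l}\bigl(Q^{\ot j},-\bigr),
\]
the Maurer--Cartan property of $Q$ ensuring that the $n^Q_{k,l}$ satisfy the open-closed relations over $\fp_Q$.

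First I would verify that these sums are well defined. In the closed sector only $\ell_1,\ell_2$ are nonzero, so the twist is the finite expression $\ad_Q$; in the open sector one either notes that $n_{k+j,l}$ vanishes for $j$ large or passes to the completion by the internal weight coming from the conical grading of $Y$ (the pure-spinor weight), which is where that grading enters. Granting this, the assignment $B\mapsto B_Q$, $f\mapsto f_Q$ is manifestly functorial: it fixes identities and respects composition, so it carries the two structure maps of \Cref{thm:rooftopenclosed} to morphisms of $A_\infty$-algebras over $\fp_Q$, giving a span
\[
  A^\bu(\Oa_Y)_Q \longleftarrow \bigl(\widetilde{A^\bu}\bigr)_Q \longrightarrow \bigl(C^\bu(\fn)\bigr)_Q.
\]

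The essential step is to show that twisting preserves quasi-isomorphisms. Here I would filter every object by the pure-spinor weight, chosen so that each twisting perturbation strictly shifts the filtration degree and therefore acts trivially on the associated graded. On the associated graded the twisted differential reduces to the untwisted one, and the linear part of each twisted morphism reduces to that of the untwisted morphism; since the latter are quasi-isomorphisms by \Cref{thm:rooftopenclosed}, the comparison theorem for the induced spectral sequences (which converge, the filtrations being exhaustive and bounded below) shows that the twisted morphisms are quasi-isomorphisms as well. This is the step I expect to be the main obstacle, since one must produce a single filtration compatible with all three objects and with both structure maps, and control its convergence in the completed setting forced by the previous step.

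It then remains to identify the twisted objects. By construction the twist of the canonical multiplet is $A^\bu(\Oa_Y)_Q$, and compatibility of twisting with the pure-spinor functor identifies it with $A^\bu$ applied to the twisted module. The twist of the free resolution $\widetilde{A^\bu}$ has the same underlying freely generated graded algebra with a perturbed differential, hence is again a Tate-type resolution, namely $\widetilde{A^\bu}_Q$. Finally, since the twist of $\fp$ induces a twist $\fn_Q$ of $\fn$, functoriality of the Chevalley--Eilenberg cochain construction gives $\bigl(C^\bu(\fn)\bigr)_Q\cong C^\bu(\fn_Q)$. Combining these identifications with the quasi-isomorphisms of the previous step yields the asserted span of quasi-isomorphisms of $A_\infty$-algebras over $\fp_Q$.
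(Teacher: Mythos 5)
Your strategy --- apply a twisting endofunctor to the untwisted span of \cref{thm:rooftopenclosed} and then identify the resulting objects --- is genuinely different from the paper's. The paper never twists the transferred structures; instead it re-runs the homological-perturbation-lemma constructions from scratch with the twisting term $\rho(Q)$ included in the perturbation, and uses an auxiliary grading (${\sf w}_2$, counting powers of $v$ and the $w_i$) to show that all higher corrections vanish, so that the transferred open-closed structures coincide on the nose with the ones already defined. Your route is attractive because the left leg then comes almost for free from general ``twisting preserves quasi-isomorphisms'' results, though note one wrinkle you should address: the pure-spinor weight ${\sf w}$ is \emph{not} preserved by $\rho(Q)$ (the paper points this out and introduces the formal variable $u$ precisely for this reason), so the filtration you invoke must be the exhaustive, bounded-below one by ${\sf w}\le k$ rather than the complete descending filtration usually demanded by the twisting formalism, and convergence of the twisted operations on $C^\bu(\fn)$ has to be checked separately since the transferred higher actions $n_{p,1}$ are a priori nonzero for all $p$.

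The genuine gap is at the right-hand vertex. Your span terminates in $\bigl(C^\bu(\fn)\bigr)_Q$, the twist of the \emph{transferred} open-closed structure on $C^\bu(\fn)$, and you identify this with $C^\bu(\fn_Q)$ by appeal to ``functoriality of the Chevalley--Eilenberg construction''. That appeal does not apply: $Q\in\ft_1$ is not an element of $\fn$, so $\fn_Q$ is not the twist of $\fn$ by one of its own Maurer--Cartan elements --- it is defined in the paper as the ideal $\fn\subset\tilde{\ft}_Q$ with brackets obtained by restricting $\mu_k^{\tilde{\ft}_Q}(x_1,\dots,x_k)=\sum_n\tfrac1{n!}\mu_{k+n}^{\tilde\ft}(Q,\dots,Q,x_1,\dots,x_k)$. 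Matching the twisted differential $n_{0,1}^Q=\sum_k\tfrac1{k!}n_{k,1}(Q^{\wedge k};-)$ on $C^\bu(\fn)$ against $d_{\fn_Q}$ therefore requires actually knowing the transferred higher actions $n_{k,1}$ evaluated on $Q$'s; this is the content of the paper's computation of the perturbation series, where $P\circ(d_QH)^n$ is shown to equal $P\circ Q^{\alpha_1}\cdots Q^{\alpha_n}\pdv{^n}{\lambda^{\alpha_1}\cdots\partial\lambda^{\alpha_n}}$, so that the induced differential is $d_\fn+\sum_{n,k\ge1}Q^{\alpha_1}\cdots Q^{\alpha_n}F^b_{\alpha_1\cdots\alpha_n a_1\cdots a_k}N^{a_1}\cdots N^{a_k}\pdv{}{N^b}$, recognisably $d_{\fn_Q}$. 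Your proposal also implicitly uses that homotopy transfer commutes with twisting (so that twisting the transferred structure agrees with transferring the twisted one); that is a theorem, not a triviality, and in any case it only reduces the problem to the same explicit computation. Without this step the right leg of your span lands on an unidentified object, so the theorem as stated is not yet proved.
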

 (Here, $Q$ represents a Maurer--Cartan element of the supersymmetry algebra $\fp$, and the subscript $_Q$ denotes an appropriate deformation of the structure with respect to $Q$. )

\bigskip
The paper is organised as follows. In \cref{sec:preliminaries} we fix conventions related to dg vector spaces, and recall some results about the homological perturbation lemma,  (twisting of) (representations of) $L_\infty$-algebras, and (twisted) open-closed homotopy algebras. We also give a brief account of homotopy transfer of these structures. In \cref{sec:three} we go on to shortly describe how supersymmetric representations (supermultiplets) are produced through the pure spinor superfield formalism, and furthermore argue that certain (twisted) supermultiplets naturally carry an open-closed homotopy algebra structure. We then focus on the case of the (twisted) canonical multiplet and explicitly construct the spans \eqref{eq:roofopenclosedintro} and \eqref{eq:rooftwistopenclosedintro} above.

\section{Preliminaries}
\label{sec:preliminaries}
\subsection{(dg) vector spaces and algebras}
\label{sec:conventions}
Throughout we work over $\C$, the field of complex numbers. Tensor products are taken over $\C$ unless otherwise specified. Differential graded (dg) vector spaces and their morphisms are $\Z$-graded. The dual of a graded vector space is taken to be the direct sum of the dual of each graded piece.

Throughout we work with the Koszul sign convention. We let $\tau_{V,W}:V\ot W\to W\ot V$ denote the \emph{flip map}. Similarly, for $\sigma\in \SS_n$, the symmetric group on $n$ elements, and a graded vector space $V$, we define the map $\sigma(-):V^{\ot n}\to V^{\ot n}$ on homogeneous elements as $v_1\ot\cdots\ot v_n\mapsto \epsilon(\sigma;v_1,\ldots,v_n)v_{\sigma(1)}\ot\cdots\ot v_{\sigma(n)}$, where $\epsilon(\sigma;v_1,\ldots,v_n)$ is the Koszul sign.

Given a graded vector space $V$, we denote the free graded symmetric algebra on $V$ by $\Sym^\bu(V)$, for which we let $\vee$ denote the graded-commutative product. Similarly, we denote the graded skew-symmetric algebra on a vector space $V$ by $\bigwedge ^\bu V$, and the graded skew-symmetric product by $\wedge$. Unless explicitly stated, a (skew-)symmetric map, is always taken to be graded (skew-)symmetric. 

A (commutative) dg algebra ((c)dga) $(A,d,m)$, is a dg vector space $(A,d)$, equipped with a (symmetric) bilinear map $m:A\ot A\to A$ such that $d$ is a derivation with respect to $m$. A \emph{semifree} cdga is a freely generated cdga, that is, a cdga whose underlying graded commutative algebra is the free graded symmetric algebra on some graded vector space $V$.

Let $[n]$ be the one-dimensional graded vector space concentrated in degree $-n$. For any vector space $V$ we define $V[n]\coloneqq[n]\otimes V$, so that $(V[n])^k\cong V^{n+k}$ as vector spaces.

We will sometimes consider graded vector spaces equipped with an additional grading called \emph{weight}. When we do, each graded piece $V^n$ of a graded vector space $V$ has a decomposition $V^n=\bigoplus_{p\in\Z}V^{n,p}$. We should remark that the weight grading do not affect potential Koszul signs, rather it just provides an additional bookkeeping tool.\footnote{This grading convention differs from, but is equivalent to, the conventions of most literature on the pure spinor superfield formalism. In our conventions, the degree and weight are, for those well versed in the pure spinor literature, the totalised (\ie ghost number) and (the negative of the) internal degree, respectively. Our conventions match the \emph{Tate bigrading} of \cite{cederwall2023canonical}.}  The aforementioned structures of dg vector spaces and (c)dgas are trivially generalised to this setting by assigning weight zero to the differential and multiplications, respectively. We adopt the convention that taking duals reverse the weight grading, thus;
\begin{equation}
  (V^*)^{n,p}=(V^{-n,-p})^*.
\end{equation}

\subsection{The homological perturbation lemma}
\label{sec:hpl}
A \emph{(deformation) retract} of dg vector spaces $(M,d_M)$ and $(N,d_N)$ is a diagram
\begin{equation}
  \label{eq:ogretract}
  \begin{tikzcd}
\arrow[loop left,"{h}"](M,d_M) \arrow[r, shift left=1ex, "{p}"] 
 &  \arrow[l, shift left=1ex, "{i}"]  (N,d_N)
\end{tikzcd}\,,
\end{equation}
where $i$ and $p$ are morphisms of dg vector spaces, and $h$ is a degree $-1$ map such that $ pi=\id_{N}$ and $\id_{M}-ip=d_{M}h+hd_M$.
If moreover the \emph{side conditions} $hh=hi=ph=0$ hold, the retract is called a \emph{strong} (deformation) retract. Similarly, for $(M,d_M)$ and $(N,d_N)$ associative dg algebras, a \emph{strong (deformation) retract of associative dg algebras} or a \emph{strong multiplicative retract} \cite{CHUANG2019130}, is a strong deformation retract of the underlying dg vector spaces, with the additional condition that $i$ and $p$ are algebra morphisms, and $h$ is a $(ip,\id)$-derivation.\footnote{If $(A,\cdot)$ is an associative algebra and $f,g\in \End(A)$, then $s\in \End(A)$ is called an $(f,g)$-derivation if $s(x\cdot y)=s(x)\cdot g(y)\pm f(x)\cdot s(y)$.} 
 The following result is then standard in the literature.
\begin{prop}[Homological perturbation lemma, \cite{CHUANG2019130, Berglund:0909.3485}]
    \label{hpl}
  Let \begin{equation}
  \label{eq:ogretract5}
  \begin{tikzcd}
\arrow[loop left,"{h}"](M,d_M) \arrow[r, shift left=1ex, "{p}"] 
 &  \arrow[l, shift left=1ex, "{i}"]  (N,d_N)
\end{tikzcd}\,,
\end{equation}
be a strong (multiplicative) retract. Let $x\in\End^1(M)$ be a perturbation to the differential $d_M$, that is $(d_M+x)^2=0$. If the sum $\sum_{k\geq 0}(xh)^k$ is well-defined, there exist maps
\begin{equation}
  \begin{split}
\label{eq:hpl}
    i'=\sum_{k\geq 0}(hx)^k \circ i,\quad p'=p\circ\sum_{k\geq 0}(xh)^k,\\
    d_N'=p\circ\sum_{k\geq 0}(xh)^k x\circ i,\quad h'=h\circ\sum_{k\geq 0}(xh)^k, 
  \end{split}
\end{equation}
such that
\begin{equation}
  \label{eq:ogretract6}
  \begin{tikzcd}
\arrow[loop left,"{h'}"](M,d_M+x) \arrow[r, shift left=1ex, "{p'}"] 
 &  \arrow[l, shift left=1ex, "{i'}"]  (N,d_N')
\end{tikzcd}\,
\end{equation}
is a strong (multiplicative) retract. 
\end{prop}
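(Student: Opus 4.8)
The plan is to prove this as the classical homological perturbation lemma together with its multiplicative refinement (in the sense of \cite{CHUANG2019130}), verifying every assertion by manipulating the two geometric series $\Sigma\coloneqq\sum_{k\geq0}(xh)^k$ and $\widetilde\Sigma\coloneqq\sum_{k\geq0}(hx)^k$, both well-defined by hypothesis. In this notation the perturbed data read $i'=\widetilde\Sigma i$, $p'=p\Sigma$ and $h'=h\Sigma$, and I would begin by recording the elementary resummation identities $\Sigma x=x\widetilde\Sigma$ and $h\Sigma=\widetilde\Sigma h$ (so that $h'=h\Sigma=\widetilde\Sigma h$), together with the consequences of the side conditions: since $hh=hi=ph=0$, every higher term of $\Sigma$ or $\widetilde\Sigma$ carries a factor annihilated on the appropriate side, giving $\Sigma h=h$, $h\widetilde\Sigma=h$, $\Sigma i=i$ and $p\widetilde\Sigma=p$. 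The single most useful identity is
\begin{equation}
  \Sigma\widetilde\Sigma=\Sigma+\widetilde\Sigma-\id_M,
\end{equation}
which holds because every mixed term $(xh)^j(hx)^k$ with $j,k\geq1$ contains a factor $hh=0$.

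With these in hand the ``algebraic'' half of the statement is immediate. The side conditions for $h'$ follow at once: $h'h'=h(\Sigma h)\Sigma=hh\Sigma=0$, $p'h'=p(\Sigma h)\Sigma=(ph)\Sigma=0$, and $h'i'=h\Sigma\widetilde\Sigma i=h(\Sigma+\widetilde\Sigma-\id)i=h\Sigma i+h\widetilde\Sigma i-hi=0$, using $\Sigma i=i$, $h\widetilde\Sigma=h$ and $hi=0$. Likewise $p'i'=p\Sigma\widetilde\Sigma i=p(\Sigma+\widetilde\Sigma-\id)i=p\Sigma i+p\widetilde\Sigma i-pi=\id_N$, using $\Sigma i=i$ and $p\widetilde\Sigma=p$.

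The computational core is the remaining retract data for the deformed differential $D\coloneqq d_M+x$: that $i'$ and $p'$ are chain maps, that $d_N'$ is a differential, and above all the homotopy relation $\id_M-i'p'=Dh'+h'D$. I would obtain these by the standard telescoping argument, starting from the unperturbed identity $\id_M-ip=d_Mh+hd_M$ and the nilpotency $d_Mx+xd_M=-x^2$ of the perturbation, substituting $\Sigma=\id+xh\Sigma$ (and its right-handed analogue) to resum the series, and letting the side conditions collapse the boundary terms onto $i'p'$ and $d_N'$. This bookkeeping is entirely mechanical and is exactly the content of the non-multiplicative lemma.

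The genuine obstacle, and the only part not already classical at the level of dg vector spaces, is the multiplicative refinement. Here one takes the perturbation $x$ to be a derivation of $m$, so that $(M,D,m)$ is again a dga; the product on $N$ is left unchanged, and it must be checked that $i'$ and $p'$ remain \emph{strict} algebra morphisms, that $d_N'$ is a derivation, and that $h'$ is an $(i'p',\id)$-derivation. I would prove this by induction on the length of the words in $\Sigma$ and $\widetilde\Sigma$, each factor $xh$ being the composite of the derivation $x$ with the $(ip,\id)$-derivation $h$. Concretely, expanding $i'(ab)-i'(a)i'(b)$ and applying the Leibniz rules for $x$ and for $h$, the correction terms are killed by the side conditions $hi=0$ and $pi=\id_N$ (which turn the idempotent $ip$ into the identity on the image of $i$); the same mechanism handles $p'$, $d_N'$ and $h'$. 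This is precisely the step where strictness—rather than a merely homotopy-coherent multiplication—is retained, and carrying the induction through all orders is where the real work lies, after which the perturbed diagram \eqref{eq:ogretract6} is a strong multiplicative retract.
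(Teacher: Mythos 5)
Your proposal is correct: the resummation identities $\Sigma h=h$, $h\widetilde\Sigma=h$, $\Sigma i=i$, $p\widetilde\Sigma=p$ and $\Sigma\widetilde\Sigma=\Sigma+\widetilde\Sigma-\id_M$ all follow from the side conditions exactly as you say, the side conditions and $p'i'=\id_N$ for the perturbed data then drop out, and the remaining chain-level and multiplicative verifications (for which you rightly note the implicit hypothesis that $x$ be a derivation) proceed by the telescoping and Leibniz-rule inductions you describe. The paper itself gives no proof of this proposition, deferring entirely to \cite{CHUANG2019130, Berglund:0909.3485}; your outline is precisely the standard argument of those references, so there is nothing further to compare.
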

Given a dg vector space $(V,d)$, it is always possible \cite{Lodval} to define a deformation retract to the cohomology of $d$;
\begin{equation}
 \label{eq:ogretractminimal}
  \begin{tikzcd}
\arrow[loop left,"{h}"](V,d) \arrow[r, shift left=1ex, "{p}"] 
&  \arrow[l, shift left=1ex, "{i}"]  (\H^\bu(V,d),0)
\end{tikzcd}\, .
\end{equation}

\subsection{\texorpdfstring{\(L_\infty\)}{L∞}-algebras}
\label{rem:cochains}
$L_\infty$-algebras are a homotopy generalisation of the notion of a Lie algebra.
\begin{definition}
An \(L_\infty\)-algebra \((\mathfrak g,\mu_k)\) is a graded vector space \(\mathfrak g=\bigoplus_{i\in\Z}\mathfrak g^i\) together with a family of skew-symmetric, multilinear maps \(\mu_k\colon\mathfrak g^{\wedge k}\to\mathfrak g\) of degree \(2-k\) for \(k\in\{1,2,3,\dotsc\}\) that satisfy, for all $n\geq1$, the identity
  \begin{equation}
    \label{eq:homotopy_Jacobi_identities}
 0=\sum_{\substack{i+j=n\\\sigma\in\SS_n}}\frac{(-1)^j}{i!j!}(-1)^{\sgn(\sigma)}\mu_{j+1}\circ (\mu_i\ot \id^{\ot j})\circ \sigma(-),   
\end{equation}
where $\sigma(-)$ is the map defined in \cref{sec:conventions}
\end{definition}
We will oftentimes abbreviate algebras by just their underlying vector space, \eg we may refer to an $L_\infty$-algebra $(\fg,\mu_k)$ as just $\fg$, leaving the additional structure implicit. An $L_\infty$-algebra $\fg$ is called \emph{strict} if $\mu_k=0$ for all $k>2$, and \emph{minimal} if $\mu_1=0$. Note that, the relations \eqref{eq:homotopy_Jacobi_identities} imply that $(\fg,\mu_1)$ is a dg vector space. 

An $L_\infty$-algebra structure on a graded vector space $\fg$ can equivalently be described by the datum of a degree one square-zero derivation $d_\fg$ (\ie $d_\fg^2=0$) on the free algebra $C^\bu(\fg)\coloneqq\Sym^\bu(\fg^*[-1])$, preserving the augmentation (see for example \cite{Reinhold}). The semifree cdga $(C^\bu(\fg),d_\fg)$, is usually called the (Chevalley--Eilenberg) cochains of $\fg$. This correspondence between semifree cdgas and $L_\infty$-algebras is an instance of the Koszul duality between commutative algebras and Lie algebras.\footnote{The interested reader can look at \cite{Priddy,Floystad,BGS,Positselski1,Positselski2,operadginzburg,Lodval, BrianNatalie} for some selected works related to Koszul duality.} We say that $\fg$ and $C^\bu(\fg)$ are Koszul dual.
\begin{remark}
  \label{rem:curved}
An $L_\infty$-algebra $\fg$ whose cochains do not preserve the augmentation is equipped with a \emph{curvature}; a map $\mu_0:\C\to \fg$, of degree 2. Such algebras are called \emph{curved $L_\infty$-algebras}, their homotopy Jacobi identities are identical to \eqref{eq:homotopy_Jacobi_identities}, but where the sum is valid for all $n\geq0$. See for example \cite{Kraft} for further details on curved $L_\infty$-algebras.    
\end{remark}
A \emph{morphism of $L_\infty$-algebras} (or \emph{$L_\infty$-morphism})  $\Phi:\fg\rightsquigarrow \fh$ is a degree 0 morphisms of the corresponding cochains
\begin{equation}
  \Phi^*:C^\bu(\fh)\to C^\bu(\fg),
\end{equation}
such that $\Phi^*(1)=1$.
The definition of a morphism of $L_\infty$-algebras is equivalent to a family of skew-symmetric multilinear maps
\begin{equation}
  \phi^{(k)}:\fg^{\wedge k}\to \fh,
\end{equation}
of degree $1-k$ satisfying some quadratic coherence relations involving the brackets of $\fg$ and $\fh$. In particular, $\phi^{(1)}$ is a morphism of the underlying cochain complexes. The explicit relations are not of interests to us in this paper and we refrain from writing them out, the reader is referred to \cite{Kraft} for more details.

We call an $L_\infty$-morphism  an \emph{$L_\infty$-(quasi-)isomorphism} if the first component map is a (quasi-)isomorphism of the underlying cochain complexes.  
\subsubsection{Twisting \texorpdfstring{\(L_\infty\)}{L∞}-algebras}
\(L_\infty\)-algebras admit a notion of ``twist'' with respect to a Maurer--Cartan element; see for example \cite{twistingprocedure,Kraft} for reviews. In the definitions below, for an $L_\infty$-algebra $(\fg,\mu_k)$, we assume that $\fg$ is equipped with a complete descending filtration; that is, a filtration $\Fa\fg$ of  $\fg$, compatible with the $L_\infty$-structure, and such that
\begin{equation}
  \label{eq:filtation0}
 \fg= \Fa^0\fg\supset\Fa^{1}\fg\supset\Fa^{2}\fg \supset\cdots,
\end{equation}
and $\bigcap_k\Fa^k\fg=\{0\}$. 
\begin{definition}
Let \((\mathfrak g,\mu_k)\) be an \(L_\infty\)-algebra with a complete descending filtration. The space of Maurer--Cartan elements of $\fg$ is defined as
\begin{equation}
    \MC(\fg)\coloneqq \{Q\in \fg^1\,|\,\sum_{i=1}^\infty\frac1{i!}\mu_i(Q,\dotsc,Q)=0\}.
\end{equation}
\end{definition}
\begin{definition}[\cite{twistingprocedure}]
Let \((\mathfrak g,\mu_k)\) be an \(L_\infty\)-algebra with a complete descending filtration.
Let \(Q\in\mathfrak g^1\) be a Maurer--Cartan element.
The \emph{twist} of \(\mathfrak g\) with respect to \(Q\) is the \(L_\infty\)-algebra \((\fg_Q,\mu_k^Q)\), defined on the graded vector space \(\fg_Q=\fg\), with brackets \(\mu_k^Q:\fg_Q^{\wedge k}\to \fg_Q\) defined by
\begin{equation}\label{eq:twistedproducts}
   \mu_k^Q(x_1,\ldots,x_k)=\sum_{i\geq 0}\frac1{i!}\mu_{i+k}(Q,\ldots,Q,x_1,\dotsc,x_k).
\end{equation}
\end{definition}
\subsection{\texorpdfstring{\(L_\infty\)}{L∞}-representations}
\label{sec:mods}
The notion of a representation of (or module over) a Lie algebra generalizes to the setting of homotopy algebras as follows. Recall that a representation of a Lie algebra $\fg$ on a vector space $M$ is a Lie algebra homomorphism $\rho:\fg\to \End(M)$, where $\End(M)$ is given a Lie algebra structure through the commutator bracket. More generally, if $(M,d_M)$ is a dg vector space, then $\End(M)$ is given the structure of a dg Lie algebra, with differential $[d_M,-]$.
\begin{definition}
  Let $(\fg,\mu_k)$ be an $L_\infty$-algebra. An $L_\infty$-representation (or $L_\infty$-module, or $\fg$-module) of $\fg$ on a dg vector space $(M,d_M)$ is an $L_\infty$-morphism
  \begin{equation}
    \rho:\fg\rightsquigarrow (\End(M),[d_M,-]).
  \end{equation}
\end{definition}
The definition of an $L_\infty$-module is equivalent to a family of component maps (sometimes referred to as higher actions)
\begin{equation}
  \label{eq:higheractions}
  \rho^{(k)}:\fg^{\wedge k}\ot M\to M,
\end{equation}
of degree $1-k$, which, together with the $\mu_k$'s, satisfy some quadratic coherence relations (see \eg \cite{Kraft}, and references therein). We refer to an $L_\infty$-module as \emph{strict} if $\rho^{(k)}=0$ for $k>1$.

Similarly, to the definition of a morphism of $L_\infty$-algebras, a morphism $\phi:M\rightsquigarrow N$ of $L_\infty$-modules is a family of component maps $\phi^{(k)}:\fg^{\wedge k}\ot M\to N$, satisfying some quadratic coherence relations \cite{DOLGUSHEV}. Furthermore, if the zeroth component is a (quasi-)isomorphism of the underlying cochain complexes, $\phi$ is called a (quasi-)isomorphism of $L_\infty$-modules.

\begin{remark}
  \label{remark:equivdefinitions}
There are other equivalent definitions of modules of $L_\infty$-algebras. For example it can be shown that an $L_\infty$-algebra module can be encoded as a particular $L_\infty$-algebra structure on the direct sum $\fg\oplus M$, see \cite{lada} for details on this construction. The $L_\infty$-module coherence relations then correspond to the homotopy Jacobi identities \eqref{eq:homotopy_Jacobi_identities} of the $L_\infty$-algebra.
\end{remark}
\subsubsection{Twisting \texorpdfstring{\(L_\infty\)}{L∞}-representations}
\label{sec:twistmods}
Twisting of $L_\infty$-algebras generalises to $L_\infty$-representations \cite{Kraft}. Explicitly, let $\fg$ be  an \(L_\infty\)-algebra \(\fg\) and $M$ a $\fg$-representation with structure maps \(\rho^{(k)}\). The \emph{twist} of $M$ by an element $Q\in \MC(\fg)$, is the $\fg_Q$-representation $M_Q$, whose underlying graded vector space is that of $M$, but with structure maps of the form \cite{Kraft}
\begin{equation}\label{eq:twistedmodule}
    \rho_Q^{(k)}(x_1,\dotsc,x_k) \coloneqq \sum_{i\geq 0}\frac1{i!}\rho^{(i+k)}(Q,\dotsc,Q,x_1,\dotsc,x_k).
  \end{equation}

  \subsection{\texorpdfstring{\(A_\infty\)}{A∞}-algebras over \texorpdfstring{\(L_\infty\)}{L∞}-algebras}
  \label{sec:openclosed}
Given an algebra $M$ and a representation $\rho:\fg\to \End(M)$, of a Lie algebra $\fg$ on the underlying vector space of $M$,  a suitable notion for $M$ to be an algebra over $\fg$ is to say that the morphism $\rho$ factors through
\begin{equation}
  \begin{tikzcd}
    \label{eq:derivation1}
    \fg \arrow[r,"{\rho}"] \arrow[dr]& \End(M)\\
    & \Der(M)\arrow[u,hookrightarrow]
  \end{tikzcd},
\end{equation}
where $\Der(M)$ is the Lie algebra of derivations of the multiplication on $M$. An example of this is the adjoint representation of $\fg$ on itself; because of the Jacobi identity $\fg$ acts through derivations. Another perhaps less trivial example is to consider the action of vector fields on the algebra of functions on a manifold.

In \cite{Kajiura2006}, motivated by the open-closed string field theory of Zwiebach \cite{ZWIEBACH1998193}, Kajiura and Stasheff generalise this construction to the case when $\fg$ is an $L_\infty$-algebra and $M$ is an $A_\infty$-algebra. Recall that, similarly to the definition of an $L_\infty$-algebra, an $A_\infty$-algebra on a vector space $A$ is a family of degree $2-k$ multilinear maps $m_k:A^{\ot k}\to A$, satisfying some quadratic coherence relations \cite{Lodval}.  $M$ is then called an \emph{$A_\infty$-algebra over an $L_\infty$-algebra}, and the way $\fg$ acts is through what are called \emph{homotopy derivations}\footnote{This was later generalised to homotopy derivations of $L_\infty$-algebras in \cite{Derofhomotopy} and to algebras over an operad in \cite{doubek2015homotopyderivations}.}; that is, the higher action maps are derivations with respect to each product, only up to homotopy.
\begin{definition}[\cite{Kajiura2006}]
  Let $(A,m_k)$ be an $A_\infty$-algebra. A \emph{homotopy derivation} of $A$, is a collection of degree $2-q$  multilinear maps
  \begin{equation}
    \label{eq:shd}
    \theta^q:A^{\ot q}\to A, \quad q\geq1,
  \end{equation}
  satisfying coherence relations of the form
  \begin{equation}
    \begin{split}
    \label{eq:shdcoherence}
      0= \sum_{r+s=q+1}\sum_{i=0}^{r-1}\pm\theta^r \circ (\id^{\ot i}\ot m_s\ot \id^{\ot (q-(i+s))})\pm m_r \circ (\id^{\ot i}\ot \theta^s\ot \id^{\ot (q-(i+s))}).
      \end{split}
    \end{equation}
    The explicit  signs can be found in \cite{Kajiura2006}.
\end{definition}
In particular, if $A$ is a strict $A_\infty$-algebra, with differential $d$ and multiplication $m$ (that is, an associative dga), \eqref{eq:shdcoherence} implies \cite{Kajiura2006}, for $q=1$,
\begin{equation}
  \begin{split}
  \label{eq:strictder}
    &\theta^1(m(x,y))-m(\theta^1(x),y)-(-1)^xm(x,\theta^1(y))                  =-d (\theta^2(x,y)) +\theta^2(dx,y)+(-1)^x\theta^2(x,dy).
    \end{split}
  \end{equation}
   We thus see that $\theta^1$ of \eqref{eq:strictder} is a derivation of $m$ only up to homotopy, controlled by $\theta^2$. We will call a homotopy derivation \emph{strict} if $\theta^k=0$ for all $k>1$.
\begin{definition}[\cite{Kajiura2006}]
  An \emph{$A_\infty$-algebra over an $L_\infty$-algebra} is the data of a graded vector space $A$, an $L_\infty$-algebra $(\fg,\mu_k)$, and a family of partially skew-symmetric multilinear maps
  \begin{equation}
    \label{eq:higherhigheraction}
    n_{p,q}:\fg^{\wedge p}\ot A^{\ot q}\to A, \quad p\geq0,\,q\geq 1,
  \end{equation}
 of degree $2-(p+q)$, satisfying coherence relations of the form
  \begin{align}
    \label{eq:openclosedcoherence}
    0= &\sum_{p+r=n}\sum_{\sigma \in \SS_n}\frac{\pm}{p!r!}n_{1+r,m}(\mu_{p}\ot \id_{\fg}^{\ot r}\ot\id_{A}^{\ot m})(\sigma(-) \ot \id_{A}^{\ot m})\\ \nonumber
    &+\sum_{\substack{p+r=n\\ i+s+j=m }}\sum_{\sigma \in \SS_n}\frac{\pm}{p!r!}n_{p,i+1+j} (\id_{\fg}^{p}\ot \id_{A}^{\ot i}\ot n_{r,s}\ot \id_A^{\ot j})(\id_{\fg}^{\ot p}\ot \tau_{\fg^{\ot r},A^{\ot i}}\ot \id_A^{j+s})(\sigma(-) \ot \id_{A}^{\ot m}),
  \end{align}
  where $\tau_{X,Y}$ denotes the flip map. The explicit signs can be found in \cite{Kajiura2006}. 
\end{definition}
As mentioned in \cite{Kajiura2006}, the definition of an $A_\infty$-algebra over an $L_\infty$-algebra encodes several substructures. In particular, the relations \eqref{eq:openclosedcoherence} imply that $(\fg\oplus A, \mu_1+n_{0,1})$ is a cochain complex. Furthermore, the maps $n_{0,q}$ define an $A_\infty$-algebra structure on $A$, and the maps $n_{p,1}$ define an $L_\infty$-representation of the $L_\infty$-algebra $\fg$ on the graded vector space $A$. The additional maps then correspond to higher maps of homotopy derivations, controlling the failure of the $n_{p,1}$ to act through strict derivations.

It is clear that, given a (dg) Lie algebra $\fg$, the above notion of an algebra over $\fg$ (\ie \eqref{eq:derivation1}) is a special case of an $A_\infty$-algebra over an $L_\infty$-algebra. Indeed, a (dg) Lie algebra is in particular an $L_\infty$-algebra, and a (dg) associative algebra is a special case of an $A_\infty$-algebra. Moreover, having $\fg$ act through (strict) derivations is equivalent to an $A_\infty$-algebra over an $L_\infty$-algebra where only ($\mu_1$,) $\mu_2$, $n_{1,1}$, ($n_{0,1}$,) and $n_{1,1}$ are nonzero.

Analogous to the fact that an $L_\infty$-module structure on $M$ of an $L_\infty$-algebra $\fg$ can be encoded as an $L_\infty$-algebra structure on the direct sum $\fg\oplus M$ (see \cref{remark:equivdefinitions}), it can be shown \cite{Kajiura2006} that the structure of an $A_\infty$-algebra $A$ over an $L_\infty$-algebra $\fg$ is equivalent to an algebraic structure on $\fg\oplus A$. In particular, the summand $\fg$ encodes the $L_\infty$-algebra structure on $\fg$ and the maps $n_{p,q}$ of \eqref{eq:higherhigheraction} are then seen as products on $\fg\oplus A$, satisfying the coherence relations of \eqref{eq:openclosedcoherence}. The data of such an algebraic structure is what Kajiura and Stasheff call an \emph{open-closed homotopy algebra}.
\begin{definition}[\cite{Kajiura2006}]
  An \emph{open-closed homotopy algebra} $(\alg{G}=\alg{G}_c\oplus\alg{G}_o,\mu_k,n_{p,q})$ is a graded vector space $\alg{G}=\alg{G}_c\oplus\alg{G}_o$, together with an $L_\infty$-algebra structure $(\alg{G}_c,\mu_k)$ on $\alg{G}_c$, and a family of partially skew-symmetric multilinear maps
   \begin{equation}
    \label{eq:higherhigheraction2}
    n_{p,q}:\alg{G}_c^{\wedge p}\ot \alg{G}_o^{\ot q}\to \alg{G}_o, \quad p\geq0,\,q\geq 1,
  \end{equation}
  of degree $2-(p+q)$, satisfying the coherence relations of \eqref{eq:openclosedcoherence}. 
\end{definition}

Similarly to how morphisms of $L_\infty$-algebras are defined by a family of component maps, morphisms of open-closed homotopy algebras are also defined through a family of component maps. Explicitly, a morphism $\Fs: (\alg{G}_c\oplus \alg{G}_o,\mu_k,n_{p,q})\rightsquigarrow (\alg{G}_c'\oplus \alg{G}_o',\mu_k',n_{p,q}')$ of open-closed homotopy algebras is a family of maps
\begin{equation}
  \Fs^{(k)}:(\alg{G}_c)^{\wedge k}\to \alg{G}_c',\quad  \Fs^{(p,q)}:(\alg{G}_c)^{\wedge p}\ot (\alg{G}_o)^{\wedge q}\to \alg{G}_o',
\end{equation}
for $k,q\geq1$ and $p\geq0$, satisfying quadratic coherence relations involving the products on the two open-closed homtopy algebras \cite{Kajiura2006}. In particular, the coherence relations say that $\Fs^{(1)}+\Fs^{(0,1)}$ is a morphism of the underlying cochain complexes, and so, analogous to $L_\infty$-algebras, a (quasi-)isomorphism of open-closed homotopy algebras is a morphism $\Fs$ of open-closed homotopy algebras, such that $\Fs^{(1)}+\Fs^{(0,1)}$ is a (quasi-)isomorphism \cite{Kajiura2006}.

As open-closed homotopy algebras are equivalent to $A_\infty$-algebras over $L_\infty$-algebras, we take the definition of a ((quasi-)iso)morphism of $A_\infty$-algebras over  $L_\infty$-algebras to be a ((quasi-)iso)morphism of the corresponding open-closed homotopy algebras.

\subsubsection{Twisting of open-closed homotopy algebras}
\label{sec:twistopenclosed}
Kajiura and Stasheff \cite{Kajiura2006} also provide an appropriate definition of Maurer--Cartan elements for an open-closed homotopy algebra. 
\begin{definition}[\cite{Kajiura2006}]
  Let  $\alg{G}=\alg{G}_c\oplus\alg{G}_o$ be an open-closed homotopy algebra, with $L_\infty$-algebra maps $\mu_k:\alg{G}_c^{\wedge k}\to \alg{G}_c$, and maps $n_{p,q}:\alg{G}_c^{\wedge p}\ot \alg{G}_o^{\ot q}\to \alg{G}_o$. An element $Q=(Q_c,Q_o)\in \alg{G}_c^1\oplus \alg{G}_o^1 $ is a \emph{Maurer--Cartan element of the open-closed homotopy algebra $\alg{G}$} if $Q_c$ is a Maurer--Cartan element of the $L_\infty$-algebra $\alg{G}_c$, and the following equation holds 
\begin{equation}
  \label{eq:openclosedmc}
  \sum_{\substack{p\geq 0\\q\geq1}}\frac1{p!}n_{p,q}(Q_c,\ldots,Q_c;Q_o,\ldots,Q_o)=0.
\end{equation}
\end{definition}
Using the definition of Maurer--Cartan elements of open-closed homotopy algebras one can then define the notion of twisting of open-closed homotopy algebras in an analogous fashion to twisting of $L_\infty$-algebras and their representations \cite{Kajiura2006}. As an open-closed homotopy algebra $(\alg{G}_c\oplus\alg{G}_o,\mu_k,n_{p,q})$ in particular encodes an $L_\infty$-representation of $\alg{G}_c$ on $\alg{G}_{o}$ this also includes the notion of twisting of $L_\infty$-representations. Explicitly, the element $(Q,0)$, where $Q$ is a Maurer--Cartan element of the $L_\infty$-algebra $\fG_c$, defines a Maurer--Cartan element of the open-closed homotopy algebra $\alg{G}$. The twisted open-closed homotopy structure corresponds to an open-closed homotopy algebra on $\alg{G}$, whose $L_\infty$-algebra structure is the twisted one of \eqref{eq:twistedproducts} and whose higher maps $n^Q_{p,q}$ are, for $x_1,\ldots,x_p\in\fG_c$ and $\sfm_1,\ldots,\sfm_q\in \fG_o$,
\begin{equation}
  \label{eq:twistopenclosed}
     n_{p,q}^{Q}(x_1,\ldots,x_p;\sfm_1,\ldots\sfm_q)=\sum_{k\geq0}\frac1{k!}n_{p+k,q}(Q,\ldots,Q,x_1,\ldots,x_p;\sfm_1,\ldots\sfm_q).
\end{equation}
We see that, for $q=1$, \eqref{eq:twistopenclosed} is nothing other than \eqref{eq:twistedmodule}, the component maps for the $L_\infty$-action of the twisted $L_\infty$-algebra $(\alg{G}_c)_Q$ on $\alg{G}_o$. 
\subsection{Homotopy transfer}
\label{sec:homotopytransfer}
Throughout the rest of the paper we will make heavy use of the homotopy transfer theorem \cite{Lodval}. Roughly what this says is that given a ``sufficiently nice''\footnote{More properly an algebra over a Koszul operad \cite{Lodval}.} algebraic structure on a dg vector space $\alg{G}$, and a deformation retract of dg vector spaces
\begin{equation}
\begin{tikzcd}
    (\mathfrak{G},d_{\mathfrak{G}}) \rar[shift left, "p"] \ar[loop left, "h"] & (\mathfrak{H},d_{\mathfrak{H}}) \lar[shift left, "i"],
\end{tikzcd}
\label{sdr}
\end{equation}
one can transfer this structure along the retract \eqref{sdr} to obtain an (in a suitable sense) equivalent algebraic structure on $\mathfrak{H}$. The structure of relevance in this paper is that of (representations of) $L_\infty$-algebras, $A_\infty$-algebras, and open-closed homotopy algebras, all of which are examples of ``sufficently nice'' algebraic structures.

Let us sketch how this works for $L_\infty$-algebras, the generalisation to other algebraic structures is straightforward. Let $\mathfrak{G}$ in \eqref{sdr} be an $L_\infty$-algebra $(\mathfrak{G},\mu_k)$, such that $\mu_1=d_{\mathfrak{G}}$. By virtue of the homotopy transfer theorem \cite{Lodval}, there exists an \(L_\infty\)-algebra structure on $\mathfrak H$ and an \(L_\infty\)-quasi-isomorphism
\begin{equation}\label{map to min}
    I\colon\mathfrak{H}\rightsquigarrow\mathfrak{G},
\end{equation}
such that \(I^{(1)}=i\). Furthermore, the $L_\infty$-algebra structure on $\mathfrak{H}$ and the $L_\infty$-quasi-isomorphism $I$ are expressible using the homotopy data \((i,p,h)\) and the $L_\infty$-algebra structure of $\fG$ \cite{Lodval}, \eg using the tensor trick \cite{Berglund:0909.3485}, and can be interpreted as a sum over trees. For example the ternary bracket $\mu_3^{\mathfrak{H}}$, is (modulo relative signs) the sum
\begin{equation}
  \begin{tikzpicture}[scale=0.5,baseline={([yshift=-1ex]current bounding box.center)}]
    \draw [thick] (-2,0) -- (-0.6,-1.4);
    \draw [thick] (-0.35,-1.65) -- (0,-2);
    \draw [thick] (2,0) -- (0,-2);
    \draw [thick] (0,-2) -- (0,-3);
    \draw [thick] (-1,-1)--(0,0);
    \node at (-.5,-1.5) {\tiny{$h$}};
    \node at (0,-3.5) {\tiny\(p\)};
    \node at (0,0.5) {\tiny\(i\)};
    \node at (-2,0.5) {\tiny\(i\)};
    \node at (2,0.5) {\tiny\(i\)};
    
     \fill[gray!50] (-1,-1) circle (0.35cm);
    \draw [thick] (-1,-1) circle (0.35cm);
    \node at (-0.98,-1) {\tiny$\mu_2$};
    \fill[gray!50] (0,-2) circle (0.35cm);
    \draw [thick] (0,-2) circle (0.35cm);
    \node at (0.02,-2) {\tiny$\mu_2$};
  \end{tikzpicture}
  +
  \begin{tikzpicture}[scale=0.5,baseline={([yshift=-1ex]current bounding box.center)}]
    \draw [thick] (-2,0) -- (0,-2);
    \draw [thick] (2,0) -- (0.65,-1.35);
    \draw [thick] (0,-2) -- (0.4,-1.6);
    \draw [thick] (0,-2) -- (0,-3);
    \draw [thick] (1,-1)--(0,0);
    \node at (0.55,-1.45) {\tiny{$h$}};
     \fill[gray!50] (1,-1) circle (0.35cm);
    \draw [thick] (1,-1) circle (0.35cm);
    \node at (1.02,-1) {\tiny$\mu_2$};
    \fill[gray!50] (0,-2) circle (0.35cm);
    \draw [thick] (0,-2) circle (0.35cm);
    \node at (0.02,-2) {\tiny$\mu_2$};\node at (0,-3.5) {\tiny\(p\)};
    \node at (0,0.5) {\tiny\(i\)};
    \node at (-2,0.5) {\tiny\(i\)};
    \node at (2,0.5) {\tiny\(i\)};
  \end{tikzpicture}
  +
  \begin{tikzpicture}[scale=0.5,baseline={([yshift=-1ex]current bounding box.center)}]
    \draw [thick] (-2,0) -- (-0.6,-1.4);
    \draw [thick] (-0.35,-1.65) -- (0,-2);
    
    \draw [thick] (1,-1) -- (0,-2);
    \draw [thick] (0,-3) -- (0,-2);
    \draw [thick] (1,-1)--(0,0);
    \draw [thick] (-1,-1) -- (0.3,-0.57);
    \draw [thick] (2,0) -- (0.7,-.45);
    \node at (-.5,-1.5) {\tiny{$h$}};
  
    \node at (0,-3.5) {\tiny\(p\)};
    \node at (0,0.5) {\tiny\(i\)};
    \node at (-2,0.5) {\tiny\(i\)};
    \node at (2,0.5) {\tiny\(i\)};

     \fill[gray!50] (-1,-1) circle (0.35cm);
    \draw [thick] (-1,-1) circle (0.35cm);
    \node at (-0.98,-1) {\tiny$\mu_2$};
    \fill[gray!50] (0,-2) circle (0.35cm);
    \draw [thick] (0,-2) circle (0.35cm);
    \node at (0.02,-2) {\tiny$\mu_2$};
  \end{tikzpicture}
  +
  \begin{tikzpicture}[scale=0.5,baseline={([yshift=-1ex]current bounding box.center)}]
    \draw [thick] (-2,0) -- (0,-2);
    \draw [thick] (2,0) -- (0,-2);
    \draw [thick] (0,0) -- (0,-2);
    \draw [thick] (0,-2) -- (0,-3);
    \node at (0,-3.5) {\tiny\(p\)};
    \node at (0,0.5) {\tiny\(i\)};
    \node at (-2,0.5) {\tiny\(i\)};
    \node at (2,0.5) {\tiny\(i\)};
    \fill[gray!50] (0,-2) circle (0.35cm);
    \draw [thick] (0,-2) circle (0.35cm);
    \node at (0.02,-2) {\tiny$\mu_3$};
  \end{tikzpicture}.
\end{equation}
More generally, $\mu_k^{\mathfrak{H}}$ is computed by a sum  over all rooted trees with $k$ leaves, where one decorates the leaves with $i$, the $n+1$-ary vertices with $\mu_n$, the internal edges with $h$, and the root with $p$. 

Similarly, if $\mathfrak{G}$ is an $A_\infty$-algebra, we, instead of summing over all rooted trees, sum over all planar trees to obtain an $A_\infty$-algebra structure on $\mathfrak{H}$ \cite{Lodval}.

It was shown in \cite{Kajiura2006} that homotopy transfer of open-closed homotopy algebras is possible\footnote{See also, \cite{openclosedkoszul}}. 
The way homotopy transfer works for open-closed homotopy algebras is analogous to that of $A_\infty$-algebras and $L_\infty$-algebras. However, now we assume that there is a splitting $\mathfrak{G}=\mathfrak{G}_c\oplus \mathfrak{G}_o$, and we are to sum over trees built from corollas of the form
\begin{equation}
  \begin{tikzpicture}[scale=0.6,baseline={([yshift=-1ex]current bounding box.center)}]
    \draw [thick,dashed] (-2,0) -- (0,-2);
    \draw [thick,dashed] (-1,0) -- (0,-2);
    \draw [thick] (0,0) -- (0,-2);
    \draw [thick] (1,0) -- (0,-2);
    \draw [thick] (2,0) -- (0,-2);
    \draw [thick] (0,-2) -- (0,-3);
    \node at (-2,0.5) {\tiny\(\mathfrak{G}_c\)};
    \node at (-1,0.5) {\tiny\(\mathfrak{G}_c\)};
    \node at (0,0.5) {\tiny\(\mathfrak{G}_o\)};
    \node at (1,0.5) {\tiny\(\mathfrak{G}_o\)};
    \node at (2,0.5) {\tiny\(\mathfrak{G}_o\)};
    \node at (0,-3.5) {\tiny\(\mathfrak{G}_o\)};
    \fill[gray!50] (0,-2) circle (0.45cm);
    \draw [thick] (0,-2) circle (0.45cm);
    \node at (0.02,-2) {\tiny$n_{p,q}$};
  \end{tikzpicture}\quad \text{, or}
    \begin{tikzpicture}[scale=0.6,baseline={([yshift=-1ex]current bounding box.center)}]
    \draw [thick,dashed] (-2,0) -- (0,-2);
    \draw [thick,dashed] (-1,0) -- (0,-2);
    \draw [thick,dashed] (0,0) -- (0,-2);
    \draw [thick,dashed] (1,0) -- (0,-2);
    \draw [thick,dashed] (2,0) -- (0,-2);
    \draw [thick,dashed] (0,-2) -- (0,-3);
    \node at (-2,0.5) {\tiny\(\mathfrak{G}_c\)};
    \node at (-1,0.5) {\tiny\(\mathfrak{G}_c\)};
    \node at (0,0.5) {\tiny\(\mathfrak{G}_c\)};
    \node at (1,0.5) {\tiny\(\mathfrak{G}_c\)};
    \node at (2,0.5) {\tiny\(\mathfrak{G}_c\)};
    \node at (0,-3.5) {\tiny\(\mathfrak{G}_c\)};
    \fill[gray!50] (0,-2) circle (0.45cm);
    \draw [thick] (0,-2) circle (0.45cm);
    \node at (0.02,-2) {\tiny$\mu_k$};
  \end{tikzpicture},
\end{equation}
where again we attach $h$ to the internal edges, $i$ to the leaves and $p$ to the root.

In fact, as open-closed homtopy algebras in particular encode the structure of a $L_\infty$-module on the underlying vector space of $\fG_o$, it is clear that homotopy transfer of open-closed homotopy algebras specialises to homotopy transfer of $L_\infty$-modules. Similarly, as open-closed homotopy algebras also encode an $A_\infty$-algebra structure on $\fG_o$, homotopy transfer of open-closed homotopy algebras specialises to homotopy transfer of $A_\infty$-algebras. 

Whenever \(\mathfrak{H}=\H^\bu(\mathfrak{G},d_{\mathfrak{G}})\) in \eqref{sdr}, then the corresponding homotopy algebra on $\H^\bu(\mathfrak{G},d_{\mathfrak{G}})$ is called a \emph{minimal model} of \(\mathfrak{G}\); minimal models are unique up to isomorphisms of the corresponding homotopy algebras \cite{Lodval}.

\section{Open-closed homotopy algebras in the pure spinor superfield formalism}
\label{sec:three}
Let us now apply the toolbox of open-closed homotopy algebras to physics. We will work in the pure spinor superfield formalism \cite{Cederwall2013PureOverview,perspectivesonpurespin, equivalence}. Let us briefly review how this construction works.

\subsection{Supersymmetry algebras}
\label{sec:susyalgebras}
Recall that a supersymmetry algebra can be defined as follows \cite{perspectivesonpurespin, equivalence}. Let $\ft$ be a Lie algebra in degrees one and two\footnote{To keep with current conventions in the literature we shall use the notation $\ft_1\coloneqq \ft^1$, and $\ft_2\coloneqq \ft^2$}. Then $\ft$ can be seen as a central extension of the abelian Lie algebra $\ft_1$ by $\ft_2$:
\begin{equation}
  \label{SES1}
   0 \to \ft_2 \to \ft \to \ft_1 \to 0\;.
 \end{equation}

 Choosing bases 
 $d_\alpha$ and $e_\mu$ for $\ft_1$ and $\ft_2$, respectively, the Lie bracket can  be written as
\begin{equation}
  [d_\alpha,d_\beta]=2\Gamma_{\alpha\beta}^\mu e_\mu,
\end{equation}
where $2\Gamma_{\alpha\beta}^\mu$ are the coefficients of the map $\ft_1\wedge\ft_1\to \ft_2$\footnote{Recall that $\wedge$ is the graded skew-symmetric product and since $\ft_1$ is in degree one this map is in fact symmetric.} defining the extension \eqref{SES1}. We call such an algebra a \emph{supertranslation algebra}. They cover any example of supersymmetry algebra with any amount of supersymmetry \cite{perspectivesonpurespin}. We can further extend $\ft$ by including the Lie algebra $\Der(\ft)$ of (even) derivations of $\ft$ in degree zero:
\begin{equation}
  \label{eq:superpoincare}
  \fp= \Der(\ft)\ltimes \ft.
\end{equation}
We call  an algebra of the form \eqref{eq:superpoincare} a \emph{super-Poincar\'e algebra} \cite{equivalence}. For future convenience, also define a weight grading ${\sf w}$, whereby we let ${\sf w}(\Der(\ft))=0$, ${\sf w}(\ft_1)=-1$, and ${\sf w}(\ft_2)=-2$. We can then define a $\Z\times \Z$-bigrading\footnote{We should mention here that our grading conventions differs from, but is equivalent to, the conventions appearing in most literature on the pure spinor superfield formalism. In our conventions, the degree and weight are, for those well versed in the pure spinor literature, the totalised (\ie ghost number) and (the negative of the) internal degree, respectively. Our conventions match the \emph{Tate bigrading} of \cite{cederwall2023canonical}.} 
\begin{equation}
  \label{eq:bidegree}
  (|\cdot|,{\sf w}(\cdot)).
\end{equation}
 We should clarify that the weight grading {\sf w} does not affect any potential Koszul signs; the Koszul signs are solely determined by the degree $|\cdot|$ of the objects. 

The Chevalley--Eilenberg cochains of $\ft$ take the form
\begin{equation}
  C^\bu(\ft)=\Sym^\bu(\ft_1^*[-1]\oplus\ft_2^*[-1])=\Sym^\bu(\ft_2^*[-1])\ot R,
\end{equation}
where $R =\Sym^\bu(\ft_1^*[-1])$. The generators of $R$, denoted $\lambda^\alpha$, sit in bidegree $(0,1)$, whilst the generators of $\Sym^\bu(\ft_2^*[-1])$, denoted $v^\mu$, sit in bidegree $(-1,2)$. In these generators the differential on $C^\bu(\ft)$ looks like $d_{\ft}=\lambda^\alpha\Gamma_{\alpha\beta}^\mu \lambda^\beta \pdv{}{v^\mu}$. The cochains can then be written as
\begin{equation}
  C^\bu(\ft)\cong\big(\C[\lambda^\alpha,v^\mu],\lambda^\alpha\Gamma_{\alpha\beta}^\mu \lambda^\beta \pdv{}{v^\mu}\big).
\end{equation}
In particular, the degree zero cohomology of this complex is the ring 
\begin{equation}\label{eq:oyascohomology}
  \H^0(C^\bu(\ft))=\Oa_Y = \C[\lambda^\alpha]/(\lambda^\alpha\Gamma_{\alpha\beta}^\mu \lambda^\beta),
\end{equation}
of functions on the space of Maurer--Cartan elements $Y\coloneqq\MC(\ft)$ of $\ft$. (Since $\ft$ has trivial differential, Maurer--Cartan elements are just elements $Q \in \ft_1$ satisfying $[Q,Q] = 0$.) Notice that, by virtue of \eqref{eq:oyascohomology}, there exists a map $C^\bu(\ft)\to \H^\bu(C^\bu(\ft))\to \Oa_Y$ witnessing $\Oa_Y$ as a
$C^\bu(\ft)$-module.

In \cite{perspectivesonpurespin, equivalence} the authors defined the notion of a \emph{$\fp$-multiplet} as a local $\fp$-module\footnote{that is, a vector bundle on whose sections $\fp$ acts through differential operators} with some additional structure. We will in the following only be concerned with $\fp$-module structures, as described in \cref{sec:mods}, although we believe most of the construction goes through when considering $\fp$-multiplets. For simplicity we will also work algebraically, \ie with polynomial functions, rather than in the smooth setting.
\subsection{The pure spinor construction}
Given a super-Poincar\'e algebra $\fp=\Der(\ft)\ltimes \ft$ we will use the notation $Y$ for the space of Maurer-Cartan elements. The \emph{pure spinor functor} \cite{perspectivesonpurespin, equivalence} $A^\bu(-)$ associates, to any $\Der(\ft)$-equivariant $C^\bu(\ft)$-module\footnote{A $C^\bu(\ft)$-module $M$ is said to be $\Der(\ft)$-equivariant if $M$ is also a $\Der(\ft)$-module, and the $C^\bu(\ft)$-action map is equivariant with respect to this structure, where $C^\bu(\ft)$ is a $\Der(\ft)$-module by the extension of the coadjoint action of $\Der(\ft)$ on $\ft^*$.} $M$, the corresponding $\fp$-module 
\begin{equation}
  \label{eq:purespinor}
  A^\bu(M)=\big(\C[x^\mu,\theta^\alpha]\ot M,\Da= \lambda^\alpha\pdv{}{\theta^\alpha}-\lambda^\alpha\Gamma_{\alpha\beta}^\mu\theta^\beta\pdv{}{x^\mu}+v^\mu\pdv{}{x^\mu}+d_M\big),
\end{equation}
where, by letting $x^\mu$ and $\theta^\alpha$  denote coordinate functions of $\ft_2$ and $\ft_1$, respectively,  $\C[x^\mu,\theta^\alpha]$ is the algebra of functions on $\ft$.\footnote{Physically thought of as flat superspace.} Thus, $\fp$ acts on $\C[x^\mu,\theta^\alpha]$. Specifically, $\Der(\ft)$ and $\ft$ acts by rotations and translations, respectively. Moreover, $\lambda^\alpha$ and $v^\mu$ act through the module structure on $M$ defined by the input data, and trivially on $\C[x^\mu,\theta^\alpha]$.  Even though we are only concerned with $\fp$-module structures in this work, we still refer to modules obtained through the pure spinor construction as ($\fp$-)multiplets. To avoid clutter in equations, when no confusion may arise we will oftentimes omit writing out the explicit indices.

The action of $\fp$ on $A^\bu(M)$ is the tensor product of the action on $\C[x,\theta]$ with the action on $M$ taken to be the trivial extension of the action of $\Der(\ft)$ to $\fp$. Explicitly, the generators of $\ft$ act as
\begin{equation}
  \begin{split}
    \rho(d_\alpha)&=\pdv{}{\theta^\alpha}+\Gamma_{\alpha\beta}^\mu\theta^\beta\pdv{}{x^\mu},\\
    \rho(e_{\mu})&= \pdv{}{x^\mu}.
    \end{split}
  \end{equation}
  We extend the bigrading \eqref{eq:bidegree} by assigning $\theta^\alpha$ and $x^\mu$ to have bidegree $(-1,1),\;(-2,2)$, respectively. Then the differential $\Da$ has bidegree $(1,0)$. 
  \subsubsection{The canonical multiplet}
  For each choice of super-Poincar\'e algebra there exists a \emph{canonical multiplet} \cite{cederwall2023canonical}\footnote{This was referred to as the ``tautological multiplet'' in \cite{saberi2021twisting}.} defined by taking as the $\Der(\ft)$-equivariant $C^\bu(\ft)$-module, the ring of functions $\Oa_Y$ on $Y$ (\cf \cref{sec:susyalgebras}). As $\Oa_Y$ is concentrated in degree 0, $v^\mu$ acts trivially and the canonical multiplet takes the form
\begin{equation}
  \label{eq:canonical}
  A^\bu(\Oa_Y)=\big(\C[x,\theta,\lambda]/(\lambda\Gamma\lambda), \Da=\lambda\pdv{}{\theta}-\lambda\Gamma\theta\pdv{}{x}\big).
\end{equation}
  
\subsubsection{Minimal models in the pure spinor construction}
A method was described in \cite{perspectivesonpurespin, equivalence}, for how to extract a ``component field'' multiplet (physically thought of as the space of component fields of a BV-theory) from the pure spinor multiplet $A^\bu(M)$. Let us briefly sketch how this works. For simplicity, let us assume that $M$ is concentrated in degree 0\footnote{The procedure is in full generality defined for any $\Der(\ft)$-equivariant $C^\bu(\ft)$-module $M$. However, when $M$ is concentrated in more than one degree, one proceeds in two steps. We refer the reader to \cite{equivalence} for details.}, then $v^\mu$ acts trivially, and the differential on $A^\bu(M)$ is
\begin{equation}
  \Da=\lambda\pdv{}{\theta}-\lambda\Gamma\theta\pdv{}{x}.
\end{equation}
Note that $\Da$ decomposes as $\Da=\Da_0+\Da_1$, where $\Da_0\coloneqq\lambda\pdv{}{\theta}$ squares to zero by itself. Thus $\Da_1=-\lambda\Gamma\theta\pdv{}{x}$, can be seen as a perturbation to $\Da_0$. We first consider the complex
\begin{equation}
  (A^\bu(M),\Da_0)=(\C[x,\theta]\ot M,\Da_0)=\C[x]\ot (\C[\theta]\ot M,\Da_0).
\end{equation}
The component-field multiplet is then obtained by first taking the cohomology of $\Da_0$. One can then define a deformation retract to the cohomology, and by introducing the perturbation $\Da_1$ one obtains, by the homological perturbation lemma, a retract
 \begin{equation}
    \begin{tikzcd}
       \label{eq:componentretract}
\arrow[loop left]{r}\big( A^\bu(M),\,\Da_0+\Da_1\big) \arrow[r, shift left=1ex, "{}"] 
 &  \arrow[l, shift left=1ex, "{}"]   \big(\C[x]\ot \H^\bu (\C[\theta]\ot M,\Da_0),\Da'\big)
\end{tikzcd}\, ,
\end{equation}
where $\Da'$ is the differential induced by the homological perturbation lemma of the deformation $\Da_1$. As described in \cite{perspectivesonpurespin} we can now transfer the $\fp$-module structure along \eqref{eq:componentretract} to obtain a $\fp$-module structure on the component fields. This $\fp$-module
\begin{equation}
  \label{eq:minimalamodel}
  \mu A^\bu(M)\coloneqq \big(\C[x]\ot \H^\bu (\C[\theta]\ot M,\Da_0),\Da'\big),
\end{equation}
can be taken as the component-field multiplet of $A^\bu(M)$ \cite{perspectivesonpurespin,equivalence}.
\subsubsection{Open-closed homotopy algebras in the pure spinor superfield formalism}
\label{sec:openclosedcanonical}
The canonical multiplet $A^\bu(\Oa_Y)$, carries some additional structure; it is in fact a cdga (or equivalently; a strict $A_\infty$-algebra, whose binary product is commutative), the multiplication taken to be the tensor product of the algebra structure on $\C[x,\theta]$ and $\Oa_Y$. It was explained in \cite{perspectivesonpurespin, equivalence} that, since $A^\bu(\Oa_Y)$ is in particular an $A_\infty$-algebra, letting $M=\Oa_Y$ in \eqref{eq:componentretract}, we can transfer this structure along \eqref{eq:componentretract} to obtain an $A_\infty$-algebra structure on $\mu A^\bu(\Oa_Y)$.\footnote{We should mention here that, since $A^\bu(\Oa_Y)$ is comutative, the transferred structure will in particular be $C_\infty$, we will not make this distinction in the sequel.} This $A_\infty$-structure corresponds, in some examples, to the (colour-stripped \cite{Jurco:2019yfd,Borsten:2021hua}) interaction terms for the underlying field theory. We refer the reader to \cite{perspectivesonpurespin, equivalence} and references therein for details on this.

However, there is in fact a richer structure present: as $\fp$ acts through (strict) derivations with respect to the product on  $A^\bu(\Oa_Y)$ we have that the canonical multiplet $A^\bu(\Oa_Y)$ is in fact an $A_\infty$-algebra over the $L_\infty$-algebra $\fp$. Equivalently, there is an open-closed homotopy algebra structure on $\fp\oplus A^\bu(\Oa_Y)$. Explicitly, in the notation of \cref{sec:openclosed}, $\mu_2$ is the Lie algebra (that is, strict $L_\infty$) structure on $\fp$, $n_{1,1}$ encodes the action of $\fp$ on $A^\bu(\Oa_Y)$, $n_{0,1}=\Da=\lambda\pdv{}{\theta}-\lambda\Gamma\theta\pdv{}{x}$, $n_{0,2}$ is the product on $A^\bu(\Oa_Y)$, and all other $n_{p,q}=0$. Consequently, since $A^\bu(\Oa_Y)$ is  an $A_\infty$-algebra over the  $L_\infty$-algebra $\fp$, or equivalently,  $\fp\oplus A^\bu(\Oa_Y)$ is an open-closed homotopy algebra, by homotopy transfer there is an open-closed homotopy algebra structure on $\fp\oplus \mu A^\bu(\Oa_Y)$. In particular, the induced products $n_{p,1}'$ will then encode the homotopy action of $\fp$ on $\mu A^\bu(\Oa_Y)$, and the maps $n_{0,k}'$ correspond to the $A_\infty$-algebra structure on $\mu A^\bu(\Oa_Y)$. However, in general there are also maps
\begin{equation}
  \label{eq:higherthetas}
  \theta_{p,q}\coloneqq n'_{p,q}:\fp^{\wedge p}\ot (\mu A^\bu(\Oa_Y))^{\ot q}, \quad p\geq 1,q \geq2.
\end{equation}
These maps then encode the compatibility between the homotopy action of $\fp$ and the $A_\infty$-algebra structure on $\mu A^\bu(\Oa_Y)$. This provides a precise framework for including interactions in a supersymmetric theory as discussed in for example \cite[Remark 2.1]{equivalence}. 
  
\begin{remark} One might ask what the appropriate physical interpretation of these $\theta_{p,q}$'s is. An $L_\infty$-module explains the appearance of linearised on-shell symmetries in physics (see for example \cite{perspectivesonpurespin}).  It is thus natural to expect that;
   just as the higher component maps of a homotopy action compensate for the fact that $\rho^{(1)}$ only closes modulo the linearised equations of motion, the $\theta_{p,q}$ should now, in the presence of interactions,  compensate for the fact that the symmetry algebra closes modulo the full equations of motion. 
\end{remark}

\begin{remark}
The above discussion is not limited to the canonical multiplet but works with any $\Der(\ft)$-equivariant $C^\bu(\ft)$-algebra, as long as the $\Der(\ft)$ action is through strict derivations with respect to the multiplication.
\end{remark}

\subsubsection{Twists of supermultiplets in the pure spinor formalism}
\label{sec:purespinortwist}
As the pure spinor construction produces strict $\fp$-modules, twisting in this formalism is simple: given a $\fp$-multiplet $A^\bu(M)$ from a $\Der(\ft)$-equivariant $C^\bu(\ft)$-module $M$, the twist of $A^\bu(M)$ by a Maurer--Cartan element $Q=Q^\alpha d_\alpha\in Y=\MC(\fp)$ is the $\fp_Q=(\fp,\ad_Q)$-module
\begin{equation}
  \label{eq:twistofcanonical}
  A^\bu(M)_Q=\big(\C[x,\theta]\ot M,\Da_Q= \Da+\rho(Q)\big),
\end{equation}
where $\rho(Q)=Q^\alpha\pdv{}{\theta^\alpha}+Q^\alpha\Gamma_{\alpha\beta}^\mu\theta^\beta\pdv{}{x^\mu}$, and $\Da$ is defined in \eqref{eq:purespinor}. We should mention here that the differential $\Da_Q$ no longer respects the weight grading ${\sf w}$, simply because $\rho(Q)$ does not. This issue is well-known, and can be resolved by introducing a formal variable $u$ of bidegree $(0,1)$, and letting $\Da_Q=\Da+u\rho(Q)$ see \eg \cite{saberi2021twisting,Costellotwistinglectures} for more details on this. We will, with slight abuse of notation, not write $u$ in equations and instead think of $Q^\alpha$ has having bidegree $(0,1)$. Since the action of $\fp$ on $A^\bu(M)$ is strict, the action of $\fp_Q$ on $A^\bu(M)_Q$ is still strict, and defined through the same map (\cf \cref{sec:twistmods}).

The twisted canonical multiplet $A^\bu(\Oa_Y)_Q$ is again a cdga; $\fp$ acts through derivations, and so the perturbed differential $\Da+\rho(Q)$ on $A^\bu(\Oa_Y)$ will still be a derivation. In \cite{saberi2021twisting}, these cdgas were studied in detail, for various examples of super-Poincar\'e algebras.

Moreover, since the action of $\fp_Q$ is defined through the same map as the action of $\fp$, and the underlying commutative algebra structure of $A^\bu(\Oa_Y)_Q$ is the same as that of $A^\bu(\Oa_Y)$, it is clear that $A^\bu(\Oa_Y)_Q$ is now an $A_\infty$-algebra over $\fp_Q$, or equivalently, there is an open-closed homotopy algebra structure on $\fp_Q\oplus A^\bu(\Oa_Y)_Q$. (This is equivalent to twisting the open-closed homotopy algebra structure on $\fp\oplus A^\bu(\Oa_Y)$ by the element $(Q,0)$, as described in \cref{sec:twistopenclosed}.)
\subsection{Resolving the canonical multiplet} 
As mentioned above, the canonical multiplet $A^\bu(\Oa_Y)$ is in particular a cdga. However, it is not freely generated as an algebra, simply because $\Oa_Y=\C[\lambda]/(\lambda\Gamma\lambda)$ is not. It is interesting however to try to further resolve $A^\bu(\Oa_Y)$ by a freely generated cdga. A procedure for how to go about this was described in \cite{cederwall2023canonical}.
There, the authors proceed by taking the \emph{Tate resolution} \cite{Tate} of $\Oa_Y$. Let us briefly recall how this works. 
\subsubsection{The Tate resolution}
\label{sec:tate}
Given a polynomial ring $R$ and a homogeneous ideal $I\subset R$, the Tate resolution \cite{Tate} constructs a multiplicative resolution of the quotient ring $R/I$ in free $R$-modules, that is, a semifree cdga over $R$. We construct the resolution iteratively, by ``killing cohomology'' degree by degree. The starting point for the resolution is $R$, we can endow it with the structure of a cdga by declaring that the differential is zero on all generators of $R$. (On free algebras, derivations are uniquely defined by their actions on generators.) The (degree zero) cohomology is then $R$. To construct a resolution of $R/I$ we now adjoin generators of degree $-1$ that are mapped by the differential to the generators of $I$, generating a semifree cdga whose degree zero cohomology is $R/I$. (Note that, for the case at hand here, the resulting cdga is just $C^\bu(\ft)$.)

However, with the introduction of new generators of degree $-1$, it may happen that $\H^{-1}$, the cohomology in degree $-1$, is non-zero. When this happens, we choose a set of generators of $\H^{-1}$, and adjoin degree $-2$ generators $w_2^i,\;i=1,\ldots,\dim(\H^{-1})$ to the cdga which are mapped to the generators of $\H^{-1}$ by the differential. The resulting semifree cdga then has empty cohomology in degree $-1$, while the degree zero cohomology is still $R/I$. Proceeding in this fashion by adding new generators to kill cohomology in negative degrees, the Tate resolution produces a semifree resolution of $R/I$. 

We call the procedure of introducing new generators a \emph{stage}. We number the stages so that at, stage $k$, generators in degree $-k$ are introduced. Thus, at stage $k$ we have a freely generated cdga whose degree zero cohomology is $R/I$, and whose cohomology in degrees $0 > d > -k$ is zero. ($R$ itself is then stage zero.) In the present example, the first stage is $C^\bu(\ft)$, the second stage adds generators to kill $\H^{-1}(C^\bu(\ft))$, and so on.

Recall that, in our example, $R=\C[\lambda^\alpha]$ and $I=(\lambda^\alpha\Gamma_{\alpha\beta}^\mu\lambda^\beta)$. Thus, both $R$ and $R/I=\Oa_Y$ are graded by weight. We can define the Tate resolution so as to respect this weight grading as follows. The generators adjoined at stage one (\ie the generators $v^\mu$ in $C^\bu(\ft)$) have weight two (as the generators $\lambda\Gamma^\mu\lambda$ of $I$ have weight two). Similarly, at higher stages, the weight of any new generator is equal to the weight of the cohomology it kills. The differential can then be seen as having bidegree $(1,0)$. 

The resulting algebra is a semifree cdga $\C[\lambda,v,w_2,w_3,\ldots]$, where $w_i$ are the generators adjoined at stage $i$, with a differential defined by the above procedure. By the discussion of \cref{rem:cochains} this algebra can then be interpreted as the Chevalley--Eilenberg cochains of a (minimal \ie zero differential) $L_\infty$-algebra $\tilde{\ft}$ (minimality follows from the construction of the Tate resolution, see \cite{cederwall2023canonical}). We thus refer to the resolution as $C^\bu({\tilde{\ft}})$. The $L_\infty$-algebra $\tilde{\ft}$ can be characterised \cite{cederwall2023canonical} by a short exact sequence of $L_\infty$-algebras
\begin{equation}
  \label{eq:tildeshortsequence}
  \begin{tikzcd}
    0\arrow{r}& \fn\arrow{r}&\tilde{\ft}\arrow{r} &\ft\arrow{r}&0,
  \end{tikzcd}
\end{equation}
where $\fn$ is the $L_\infty$-algebra defined through the Tate resolution from stage $2$ and onwards. The vector space $\fn$ is then spanned by the shifted duals of the $w_i$ for $i\geq2$. It defines an ideal in $\tilde{\ft}$ spanned by the generators of degree 3 and higher.

The weight grading defined through the Tate resolution defines a weight grading on $\tilde{\ft}$. Moreover, the weight grading is preserved by the brackets of the $L_\infty$-algebra $\tilde{\ft}$ (by construction the Chevalley--Eilenberg differential has weight zero). Taking the associated filtration to this gradation endows $\tilde{\ft}$ with a complete descending filtration.

Since the first stage of the Tate resolution is precisely $C^\bu(\ft)$, there is a natural map $C^\bu(\ft)\to C^\bu(\tilde{\ft})$  witnessing $C^\bu(\tilde{\ft})$ as a ($\Der(\ft)$-equivariant) $C^\bu(\ft)$-module. As such we can plug it in to $A^\bu$. The resulting multiplet;
\begin{equation}
  \label{eq:tildea}
  \widetilde{A^\bu}\coloneqq \big(A^\bu(C^\bu(\tilde{\ft}))=\C[x,\theta]\ot C^\bu(\tilde{\ft}),\,\tilde{d}= \lambda\pdv{}{\theta}-\lambda\Gamma\theta\pdv{}{x}+v\pdv{}{x}+d_{\tilde{\ft}}\big),
\end{equation}
(where $d_\ft$ is the Chevalley--Eilenberg differential on $C^\bu(\tilde{\ft})$) is a semifree cdga with a strict action of $\fp$.

Now, note that $A^\bu(\Oa_Y)$, $\widetilde{A^\bu}$, and $C^\bu(\fn)$ are all cdgas,  thus they are in particualr $A_\infty$-algebras. One of the main results of \cite{cederwall2023canonical} is the construction of a span of quasi-isomorphisms of $A_\infty$-algebras.
\bigskip
\begin{theorem}[\cite{cederwall2023canonical}]
  \label{thm:roofuntwisted}
There is a span
    \begin{equation}
        \begin{tikzcd}[row sep = 1 ex]
          \label{eq:roofuntwist}
          & \widetilde{A^\bu} \ar[rd] \ar[ld] &  \\
                A^\bu(\Oa_Y) & & C^\bu(\fn)
        \end{tikzcd},
      \end{equation}
of quasi-isomorphisms of $A_\infty$-algebras. 
\end{theorem}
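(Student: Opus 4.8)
The plan is to construct the two legs of the span separately and verify that each is a quasi-isomorphism of $A_\infty$-algebras (in fact of cdgas). For the left leg $\widetilde{A^\bu}\to A^\bu(\Oa_Y)$, I would exploit the functoriality of $A^\bu(-)$: since the Tate resolution comes equipped with a quasi-isomorphism of cdgas $C^\bu(\tilde\ft)\to \Oa_Y$ (the augmentation onto degree-zero cohomology $R/I=\Oa_Y$, with $v$ and all higher $w_i$ mapping to zero), applying the functor $A^\bu(-)$ yields a map $\widetilde{A^\bu}=A^\bu(C^\bu(\tilde\ft))\to A^\bu(\Oa_Y)$. Because $A^\bu(-)$ is (by the cited equivalence) an exact functor of dg-categories and the tensor factor $\C[x,\theta]$ is flat, this map is again a quasi-isomorphism; and since the multiplication on $A^\bu(M)$ is simply the tensor product of the algebra structures on $\C[x,\theta]$ and $M$, a multiplicative input map induces a multiplicative output map, so the left leg is a cdga morphism.

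For the right leg $\widetilde{A^\bu}\to C^\bu(\fn)$, the idea is to kill the superspace coordinates and the $\ft$-data. Recall $\widetilde{A^\bu}=\C[x,\theta]\ot C^\bu(\tilde\ft)$ with $C^\bu(\tilde\ft)=\C[\lambda,v,w_2,w_3,\dots]$. I would first observe that the differential $\tilde d$ contains the piece $\lambda\,\partial_\theta$, whose cohomology collapses the pair $(\theta,\lambda$-contracted part$)$, and the piece $v\,\partial_x + (\text{from } d_{\tilde\ft})$ which pairs $x$ with $v$ and pairs $\lambda$ against its resolution. Concretely, the short exact sequence \eqref{eq:tildeshortsequence} lets us write $\tilde\ft$ as an extension of $\ft$ by $\fn$, and the combined effect of the $\ft$-directions ($\theta,\lambda,x,v$) is acyclic except for the constant, leaving exactly $C^\bu(\fn)$. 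The cleanest route is to set up a strong multiplicative deformation retract from $\widetilde{A^\bu}$ onto $C^\bu(\fn)$ whose contracting homotopy implements the Koszul-type contractions pairing $\theta\leftrightarrow\lambda$ and $x\leftrightarrow v$; one should check this retract is multiplicative, so that the projection is a cdga morphism and a quasi-isomorphism by construction.

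The step I expect to be the main obstacle is verifying acyclicity of the "$\ft$-part" of $\widetilde{A^\bu}$ and exhibiting it as a genuine \emph{multiplicative} retract onto $C^\bu(\fn)$, rather than merely a quasi-isomorphism of complexes. The subtlety is that $\tilde d$ does not split cleanly as a sum of two commuting acyclic pieces: the term $\lambda\Gamma\theta\,\partial_x$ mixes the $\theta$-contraction with the $x$-contraction, and $d_{\tilde\ft}$ couples $\lambda$ (stage $0$) to $v$ (stage $1$) and beyond. I would handle this with the homological perturbation lemma (\cref{hpl}): start from the "free" piece $\Da_0=\lambda\,\partial_\theta + v\,\partial_x + d_{\tilde\ft}^{\mathrm{lin}}$ for which the contracting homotopy and retract onto $C^\bu(\fn)$ are explicit, then treat $-\lambda\Gamma\theta\,\partial_x$ (together with any higher nonlinear Tate differentials) as a perturbation. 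Since \cref{hpl} transports strong \emph{multiplicative} retracts to strong multiplicative retracts, provided one checks that $\Da_0$ admits such a retract and that the neumann-type sum $\sum_{k\ge0}(xh)^k$ terminates (which it does, being filtered by weight via the complete descending filtration on $\tilde\ft$), the perturbed retract onto $C^\bu(\fn)$ is automatically multiplicative. Composing the projection of this retract with the cdga structure then gives the right leg as a quasi-isomorphism of $A_\infty$-algebras, completing the span.
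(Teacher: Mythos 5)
Your proposal is correct and follows essentially the same route as the paper: the left leg is the augmentation $C^\bu(\tilde\ft)\to\Oa_Y$ of the Tate resolution tensored with $\C[x,\theta]$ (equivalently, functoriality of $A^\bu(-)$ plus the fact that the product is the tensor product of algebra structures), and the right leg is obtained from the strong multiplicative Koszul contraction $h=\theta\,\partial_\lambda+x\,\partial_v$ onto $C^\bu(\fn)$, perturbed by $-\lambda\Gamma\theta\,\partial_x+d_{\tilde\ft}$ via the homological perturbation lemma, exactly as in \eqref{eq:tildeACnretract}--\eqref{eq:tildeACnretract4}. The only cosmetic differences are that the paper phrases the right-leg contraction via the Koszul dual $L_\infty$-algebra $(\widetilde{A^\bu})^!\cong(\ft\to\ft[-1]\to\fn)$ and that $d_{\tilde\ft}$ has no linear part (so your $\Da_0$ is just $\lambda\,\partial_\theta+v\,\partial_x$), neither of which affects the argument.
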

As the pure spinor functor preserves quasi-isomorphisms \cite{equivalence}, it is clear that $A^\bu(\Oa_Y)$ and $\widetilde{A^\bu}$ are quasi-isomorphic as $\fp$-modules. Furthermore, by homotopy transfer along the right leg of \eqref{eq:roofuntwist}, there is a $\fp$-module structure on $C^\bu(\fn)$. This structure is, by the general theory of homotopy transfer \cite{Lodval}, quasi-isomorphic to the $\fp$-module structure on $\widetilde{A^\bu}$ and $A^\bu(\Oa_Y)$. 
Now, as we saw in \cref{sec:openclosedcanonical}, $A^\bu(\Oa_Y)$ is in fact an $A_\infty$-algebra over the (strict) $L_\infty$-algebra $\fp$. Analogously, so is $\widetilde{A^\bu}$. We will in the following show that \cref{thm:roofuntwisted} can be lifted to a span of $A_\infty$-algebras over $L_\infty$-algebras, and in fact show that this construction naturally extends to twisting of the multiplets.

\subsection{Spans of open-closed homotopy algebras}

\subsubsection{\texorpdfstring{$\widetilde{A}^\bu_Q$}{\~A\_Q} and the (twisted) canonical multiplet}
As $A^\bu(\Oa_Y)$ and $\widetilde{A^\bu}$ are both examples of $A_\infty$-algebras over $\fp$, there are open-closed homotopy algebra structures on $\fp\oplus A^\bu(\Oa_Y)$ and $\fp\oplus \widetilde{A^\bu}$, respectively. Moreover, because the action of $\fp$ is through (strict) derivations with respect to the multiplications on $A^\bu(\Oa_Y)$ and $\widetilde{A^\bu}$, the twists $A^\bu(\Oa_Y)_Q$ and $\widetilde{A^\bu}_Q$ by an element $Q^\alpha d_\alpha\in Y$ will be $A_\infty$-algebras over the $L_\infty$-algebra $\fp_Q$ (\cf \cref{sec:purespinortwist}). In the setting of open-closed homotopy algebras, after twisting there are open-closed homotopy algebra structures on $\fp_Q\oplus A^\bu(\Oa_Y)_Q$ and $\fp_Q\oplus \widetilde{A^\bu}_Q$, characterised by exactly the same maps as their untwisted counterparts, apart from the differentials, which are deformed according to the twisting procedure of $L_\infty$-algebras and their modules (see \cref{sec:preliminaries})\footnote{This is also equivalent to twisting the open-closed homotopy algebras by the element $(Q,0)$ as discussed in \cref{sec:purespinortwist}.}. Explicitly we have, as dg vector spaces,
\begin{align}
  \fp_Q\oplus A^\bu(\Oa_Y)_Q=&\big(\fp\oplus A^\bu(\Oa_Y),\;\ad_Q+\Da_Q\big),\; \text{}\; \Da_Q=\lambda\pdv{}{\theta}-\lambda\Gamma\theta\pdv{}{\theta}+\rho(Q),\\
  \fp_Q\oplus \widetilde{A^\bu}_Q=&\big(\fp\oplus \widetilde{A^\bu},\; \ad_Q+\tilde{d}_Q\big),\; \text{}\; \tilde{d}_Q=\tilde{d}+\rho(Q),
\end{align}
where $\rho(Q)$ is
defined below \eqref{eq:twistofcanonical}, and $\tilde{d}$ is defined in \eqref{eq:tildea}.
\begin{prop}
  \label{prop:AOYopenclosed}
  $\fp_Q\oplus \widetilde{A^\bu}_Q$ and  $\fp_Q\oplus A^\bu(\Oa_Y)_Q$ are quasi-isomorphic as open-closed homotopy algebras.
\end{prop}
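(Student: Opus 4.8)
The plan is to exhibit a single \emph{strict} morphism of open-closed homotopy algebras from $\fp_Q\oplus \widetilde{A^\bu}_Q$ to $\fp_Q\oplus A^\bu(\Oa_Y)_Q$ whose underlying cochain map is a quasi-isomorphism; since quasi-isomorphism of open-closed homotopy algebras is generated by such morphisms, one such map suffices. The starting point is that the Tate resolution supplies a $\Der(\ft)$-equivariant quasi-isomorphism of cdgas $\pi\colon C^\bu(\tilde{\ft})\to \Oa_Y$, namely the canonical projection of a semifree resolution onto the object it resolves. Because $A^\bu(-)$ is functorial and preserves quasi-isomorphisms \cite{equivalence}, the induced map $A^\bu(\pi)\colon \widetilde{A^\bu}\to A^\bu(\Oa_Y)$ is simultaneously a morphism of cdgas and a quasi-isomorphism of $\fp$-modules.

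The key step is to repackage $A^\bu(\pi)$ as a morphism of the full open-closed structure, rather than as a morphism of $A_\infty$-algebras and of $\fp$-modules in isolation. I would take $\Fs$ strict, with $\Fs^{(1)}=\id_\fp$ on the closed part and $\Fs^{(0,1)}=A^\bu(\pi)$ on the open part, and all remaining components $\Fs^{(k)}$ ($k>1$) and $\Fs^{(p,q)}$ ($p+q>1$) equal to zero. Since the only nonzero products are $\mu_2$, $n_{0,1}$, $n_{0,2}$, and $n_{1,1}$ (\cf \cref{sec:openclosedcanonical}), the coherence relations for a morphism of open-closed homotopy algebras collapse to three statements already in hand: $A^\bu(\pi)$ is a cochain map (compatibility with $n_{0,1}$), an algebra map (compatibility with $n_{0,2}$), and $\fp$-equivariant (compatibility with $n_{1,1}$), while compatibility with $\mu_2$ is trivial because $\Fs^{(1)}=\id_\fp$. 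As $\Fs^{(1)}+\Fs^{(0,1)}=\id_\fp\oplus A^\bu(\pi)$ is a quasi-isomorphism, $\Fs$ is a quasi-isomorphism of open-closed homotopy algebras in the untwisted case.

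It then remains to pass to the twisted setting. Here I would invoke functoriality of the twisting construction of \cref{sec:twistopenclosed}: the Maurer--Cartan element $(Q,0)$ of $\fp\oplus A^\bu(\Oa_Y)$ corresponds under $\Fs$ to $(Q,0)$ for $\fp\oplus \widetilde{A^\bu}$ (since $\Fs^{(1)}=\id_\fp$), so $\Fs$ twists to a morphism $\Fs_Q$ between the twisted algebras. The twisted morphism components are obtained by inserting copies of $Q$ into the higher components, in direct analogy with \eqref{eq:twistopenclosed}; as every higher component of $\Fs$ vanishes, this sum collapses and $\Fs_Q$ retains exactly the components $\id_\fp$ and $A^\bu(\pi)$. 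Equivalently, one checks directly that $A^\bu(\pi)$ intertwines the twisted differentials,
\[
A^\bu(\pi)\circ\tilde{d}_Q=A^\bu(\pi)\circ(\tilde{d}+\rho(Q))=(\Da+\rho(Q))\circ A^\bu(\pi)=\Da_Q\circ A^\bu(\pi),
\]
using that $A^\bu(\pi)$ commutes with the undeformed differential and with $\rho(Q)$ by equivariance (the deformation $\rho(Q)$ acts only on the common $\C[x,\theta]$ factor), while its compatibility with the undeformed multiplication and action is untouched by twisting. Since the underlying cochain map remains a quasi-isomorphism, $\Fs_Q$ establishes the claim.

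I expect the main obstacle to be conceptual rather than computational: ensuring that $A^\bu(\pi)$ is genuinely a morphism of the \emph{entire} open-closed structure, i.e. that the cdga-morphism and $\fp$-module-morphism properties hold simultaneously and compatibly. This hinges on the pure spinor functor being functorial in a manner compatible with both the commutative product and the strict $\fp$-action at once; once that is granted, the vanishing of all higher components of $\Fs$ renders every coherence relation either trivial or one of the three facts above, and the twisted case follows formally.
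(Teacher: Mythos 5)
Your proposal is correct, but it takes a genuinely different route from the paper. The paper works with the strong retract $C^\bu(\tilde{\ft})\to\Oa_Y$ tensored with $\C[x,\theta]$, perturbs it via the homological perturbation lemma, and then transfers the open-closed homotopy algebra structure along the perturbed retract, using the auxiliary weight ${\sf w}_2$ (counting occurrences of $v$ and the $w_i$) to show that no higher operations or higher morphism components survive, so that the transferred structure coincides with the pure spinor one; the quasi-isomorphisms are then the corrected retract maps. You instead exhibit a single strict morphism of open-closed homotopy algebras with components $\id_{\fp}$ and $A^\bu(\pi)=\id_{\C[x,\theta]}\ot\pi$ and observe that, both structures being strict, the coherence relations collapse to the three facts that $\pi$ induces a cochain map, an algebra map, and a $\fp$-equivariant map. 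This is more economical, and your $A^\bu(\pi)$ is in fact exactly the paper's $p'$, since the perturbative corrections to $p$ vanish. Two points deserve more care. First, you should justify that the degree-zero projection $\pi\colon C^\bu(\tilde{\ft})\to\Oa_Y$ is a cdga map at all: this holds because $C^\bu(\tilde{\ft})$ is concentrated in non-positive degrees, so that $d(C^{-1})$ together with the negative-degree part forms an ideal and $\pi$ is the quotient map. Second, and more substantively, intertwining the twisted differentials does not by itself show that $A^\bu(\pi)$ remains a quasi-isomorphism after adding $\rho(Q)$; the paper gets this for free because its map is one leg of a deformation retract. You need to supply either that retract, or a filtration argument (filter by the weight ${\sf w}$, which is bounded below, is preserved by the untwisted differentials and lowered by $\rho(Q)$, so the map of associated gradeds is the untwisted quasi-isomorphism), or the cited fact that twisting preserves quasi-isomorphisms of suitably filtered modules. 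With that supplied, your argument is complete; what the paper's longer route additionally buys is the explicit homotopy data $(i',p',h')$ and the identification of the pure spinor structure on $\fp_Q\oplus A^\bu(\Oa_Y)_Q$ as the homotopy-transferred one.
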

\vspace{-4mm}
\begin{proof}
  We will proceed as in \cite{cederwall2023canonical}. Recall that, since $C^\bu(\tilde{\ft})$ is a resolution of $\Oa_Y$, we can find a strong retract
  \begin{equation}
    \begin{tikzcd}
       \label{eq:AOYretract}
\arrow[loop left]{r}{h} (C^\bu(\widetilde{\ft}),d_{\ft}) \arrow[r, shift left=1ex, "{p}"] 
 &  \arrow[l, shift left=1ex, "{i}"]  (\Oa_Y,0)  
\end{tikzcd}\, .
\end{equation}
Tensoring \eqref{eq:AOYretract} with $\C[x,\theta]$, and taking the direct sum with the trivial retract relating $\fp_Q$ with itself, yields a retract
  \begin{equation}
    \begin{tikzcd}
       \label{eq:AOYretract2}
\arrow[loop left]{r}{0+\id_{\C[x,\theta]} \ot h} (\fp_Q\oplus\C[x,\theta]\ot C^\bu(\widetilde{\ft}),\ad_Q+d_{\tilde{\ft}}) \arrow[r, shift left=1ex, "{\id_{\fp_Q}+\id_{\C[x,\theta]} \ot p}"] 
 &  \arrow[l, shift left=1ex, "{\id_{\fp_Q}+\id_{\C[x,\theta]}\ot i}"]  (\fp_Q\oplus \C[x,\theta]\ot\Oa_Y,\ad_Q+0)  
\end{tikzcd}\, .
\end{equation}
We now perturb the differential on the left of \eqref{eq:AOYretract2} by adding the twisted differential
\begin{equation}
  \widetilde{\Da}_Q\coloneqq v\pdv{}{x}+\lambda(\pdv{}{\theta}-\Gamma\theta\pdv{}{x})+Q(\pdv{}{\theta}+\Gamma\theta\pdv{}{x}).
\end{equation}
By the homological perturbation lemma, $\widetilde{\Da}_Q$ induces a differential on the summand $\C[x,\theta]\ot \Oa_Y\cong A^\bu(\Oa_Y)_Q$, for which the first term is
\begin{equation}
 (\id_{\C[x,\theta]}\ot p)\circ\widetilde{ \Da}_Q\circ(\id_{\C[x,\theta]}\ot i)=(\id_{\C[x,\theta]}\ot p)\circ\big(\lambda(\pdv{}{\theta}-\Gamma\theta\pdv{}{x})+Q(\pdv{}{\theta}+\Gamma\theta\pdv{}{x})\big)\circ (\id_{\C[x,\theta]}\ot i),
\end{equation}
which is precisely the form of the twisted differential $\Da_Q$ of the twist of the canonical multiplet \eqref{eq:twistofcanonical}. The higher deformations of the form $(\id_{\C[x,\theta]} \ot p)\circ(\widetilde{\Da}_Q(\id_{\C[x,\theta]}\ot h))^n\widetilde{\Da}_Q\circ(\id_{\C[x,\theta]}\ot i))$, are all zero by the same argument as in \cite[Prop. 3.2]{cederwall2023canonical}: the image of $h$ lies inside the ideal of $\C[x,\theta,\lambda,v,w_2,\ldots]$ generated by the elements of strictly negative degrees, that is, $\Im(h)\subseteq (v+\sum_{i\geq2}w_i)\C[x,\theta,\lambda,v,w_2,\ldots]\subset \C[x,\theta,\lambda,v,w_2,\ldots]$. Or in other words, the elements in the image of $h$ always contain at least one power of $v$ or $w_i$ for $i\geq 2$. By defining a weight grading ${\sf w}_2$ such that ${\sf w}_2 (x)={\sf w}_2 (\theta)={\sf w}_2 (\lambda)=0$ and ${\sf w}_2 (v)={\sf w}_2 (w_i)=1$, we have that the terms of $\widetilde{\Da}_Q$ either have ${\sf w}_2$ weight 0 or 1, and the ${\sf w}_2$ weight of each component of $h$ is always greater than or equal to 1. By slight abuse of notation we will write ${\sf w}_2(h)\geq1$ to mean that each component of $h$ has ${\sf w}_2$ weight at least one. Now, $p$ projects down to the ${\sf w}_2=0$ component and thus all terms with at least one $h$ must vanish. We thus end up with a deformation retract
\begin{equation}
    \begin{tikzcd}
      \label{eq:AOYtwistretract}
       \arrow[loop left]{r}{0+h'}  (\fp_Q\oplus\widetilde{A^\bu}_Q,\ad_Q+\tilde{d}_Q) \arrow[r, shift left=1ex, "{\id_{\fp_Q}+ p'}"] 
 &  \arrow[l, shift left=1ex, "{\id_{\fp_Q}+i'}"] (\fp_Q\oplus A^\bu(\Oa_Y)_Q,\ad_Q+\Da_Q)
\end{tikzcd}\, ,
\end{equation}
where
\begin{align}
  p'&=(\id_{\C[x,\theta]}\ot p)\circ \sum_{n\geq0}(\widetilde{\Da}_Q(\id_{\C[x,\theta]}\ot h))^n=(\id_{\C[x,\theta]}\ot p),\\
  i'&=\sum_{n\geq0}((\id_{\C[x,\theta]}\ot h)\widetilde{\Da}_Q)^n\circ(\id_{\C[x,\theta]}\ot i),\;\nonumber h'=(\id_{\C[x,\theta]}\ot h)\circ \sum_{n\geq0}(\widetilde{\Da}_Q(\id_{\C[x,\theta]}\ot h))^n.
\end{align}
Observe that all higher corrections of $\id_{\C[x,\theta]}\ot p$ are zero, by the same argument as for why the higher corrections to the differential are zero.

Now, the left side of \eqref{eq:AOYtwistretract} is equipped with the structure of an open-closed homotopy algebra. In particular, the $L_\infty$-algebra structure is given by that on $\fp_Q$, the $A_\infty$-algebra structure is given by that on $\widetilde{A^\bu}_Q$, and the action of $\fp_Q$, which is given by strict derivations, is just the strict one defined through the pure spinor construction. That is, all maps $n_{p,q}:\fp_Q^{\wedge p}\ot (\widetilde{A^\bu}_Q)^{\ot q}\to\widetilde{A^\bu}_Q$, apart from $n_{1,1}$, $n_{0,1}$,and  $n_{0,2}$, are zero. By homotopy transfer this gives rise to an open-closed homotopy algebra structure on $\fp_Q\oplus A^\bu(\Oa_Y)_Q$. However, there is already an open-closed homotopy algebra defined on this vector space; the one described in \cref{sec:openclosedcanonical}. To show that we indeed have an equivalence of $A_\infty$-algebras over an $L_\infty$-algebra we must show four things:
\begin{enumerate}
  \item The $L_\infty$-structure induced on $\fp_Q$, transferred along \eqref{eq:AOYtwistretract}, coincides with the one already defined.
\item The $A_\infty$-structure induced on $A^\bu(\Oa_Y)_Q$,  transferred along \eqref{eq:AOYtwistretract}, coincides with the one already defined.\\
\item The $\fp_Q$-module structure induced on $A^\bu(\Oa_Y)_Q$, transferred along \eqref{eq:AOYtwistretract}, coincides with the one already defined. \\
\item No higher maps $n'_{p,q}:\fp_Q^{\wedge p}\ot A^\bu(\Oa_Y)_Q^{\ot q}\to A^\bu(\Oa_Y)_Q$, for $p\geq 1$ and $q\geq 2$ appear through homotopy transfer along \eqref{eq:AOYtwistretract}.
\end{enumerate}
The first point is trivially checked. Indeed, the homotopy acting on $\fp_Q$ is just zero, and so, no higher brackets can appear. The binary bracket is just the one already defined.

The binary product induced on $A^\bu(\Oa_Y)$ is
\begin{equation}
  m'\coloneqq n'_{0,2}= (\id_{\C[x,\theta}\ot p)\circ m \circ (\id_{\C[x,\theta]}\ot i), 
\end{equation}
where $m\coloneqq n_{0,2}$ is the product on $\widetilde{A^\bu}$. This coincides with the binary product already defined on $A^\bu(\Oa_Y)$.

The linear part of the transferred action of $\fp_Q$ on $A^\bu(\Oa_Y)_Q$ coincides with that defined through the pure spinor construction: the action of $g\in\fp_Q$ on ${\sf x}\in A^\bu(\Oa_Y)_Q$ is given by
\begin{align}
  \label{eq:linearaction}\nonumber
  \rho'^{(1)}(g,{\sf x})=&(\id_{\C[x,\theta]}\ot p)\rho(g,(\id_{\C[x,\theta]}\ot i')({\sf x}))\\ \nonumber
  =& (\id_{\C[x,\theta]}\ot p)\rho(g,(\id_{\C[x,\theta]}\ot i)({\sf x}))\\
  &+(\id_{\C[x,\theta]}\ot p)\rho(g,\sum_{n\geq0}((\id_{\C[x,\theta]}\ot h)\widetilde{\Da}_Q)^n\circ (\id_{\C[x,\theta]}\ot i)({\sf x}))\\ \nonumber
  =&(\id_{\C[x,\theta]}\ot p)\rho(g,(\id_{\C[x,\theta]}\ot i)({\sf x})),
\end{align}
where we used that the second term of the second equality of \eqref{eq:linearaction} is zero:  ${\sf w}_2(h)\geq 1$ and  ${\sf w}_2(\rho)=0$, thus when projecting with $\id_{\C[x,\theta]}\ot p$ to ${\sf w}_2=0$ these terms are killed. The result of \eqref{eq:linearaction} coincides with the action of $\fp_Q$ on $A^\bu(\Oa_Y)_Q$ through the pure spinor construction.\footnote{It should perhaps  not come as a surprise that $\widetilde{A^\bu}_Q$ and $A^\bu(\Oa_Y)_Q$ are equivalent as $\fp_Q$-modules; the pure spinor functor preserves quasi-isomorphisms, and it can also be shown that twisting of $L_\infty$-modules preserves quasi-isomorophisms \cite{Kraft, DOLGUSHEV, Esposito}.}

Finally, to show that we actually have an equivalence of $A_\infty$-algebras over $L_\infty$-algebras, or open-closed homotopy algebras, the only thing left to show is that no higher maps of the form
  \begin{equation}
    \fp_Q^{\wedge p}\ot (A^\bu(\Oa_Y)_Q)^{\ot q}\to A^\bu(\Oa_Y)_Q,
  \end{equation}
  for $p=0$ and $q\geq 3$ (no higher $A_\infty$-products), $p\geq2$ and $q=1$ (no higher actions), and $p\geq1$ and $q\geq2$ (no higher $\theta$'s, \cf \eqref{eq:higherthetas}), can appear through homotopy transfer of open-closed homotopy algebras. Indeed all these maps vanish. To see this, we note that any such map can be represented by a diagram with a root of the form: 
  \begin{equation}
    \label{eq:trivialtrees}
    \begin{tikzpicture}[scale=0.59,baseline={([yshift=-1ex]current bounding box.center)}]
    \draw [thick, dashed] (-2,0) -- (0,-2);
    \draw [thick] (2,0) -- (0,-2);
    \draw [thick] (0,-2) -- (0,-3);
    \draw [thick] (1,-1)--(0,0);
     \fill[white!50] (0.5,-1.5) circle (0.26cm);
    \node at (0.59,-1.5) {\tiny{$h'$}};
     \fill[gray!50] (1,-1) circle (0.26cm);
    \draw [thick] (1,-1) circle (0.26cm);
    \node at (1.02,-1) {\tiny$m$};
    \fill[gray!50] (0,-2) circle (0.26cm);
    \draw [thick] (0,-2) circle (0.26cm);
    \node at (0.02,-2) {\tiny$\rho$};
    \node at (0,-3.5) {\tiny\(\id_{\C[x,\theta]} \ot p\)};
    \node at (0,0.5) {\footnotesize\(\widetilde{A^\bu}_Q\)};
    \node at (-2,0.43) {\footnotesize\(\fp_Q\)};
    \node at (2,0.5) {\footnotesize\(\widetilde{A^\bu}_Q\)};
  \end{tikzpicture}
  \text{, or}
      \begin{tikzpicture}[scale=0.59,baseline={([yshift=-1ex]current bounding box.center)}]
    \draw [thick] (-2,0) -- (0,-2);
    \draw [thick] (2,0) -- (0,-2);
    \draw [thick] (0,-2) -- (0,-3);
    \draw [thick] (1,-1)--(0,0);
     \fill[white!50] (0.5,-1.5) circle (0.26cm);
    \node at (0.59,-1.5) {\tiny{$h'$}};
     \fill[gray!50] (1,-1) circle (0.26cm);
    \draw [thick] (1,-1) circle (0.26cm);
    \node at (1.02,-1) {\tiny$m$};
    \fill[gray!50] (0,-2) circle (0.26cm);
    \draw [thick] (0,-2) circle (0.26cm);
    \node at (0.02,-2) {\tiny$m$};
    \node at (0,-3.5) {\tiny\(\id_{\C[x,\theta]}\ot p\)};
    \node at (0,0.5) {\footnotesize\(\widetilde{A^\bu}_Q\)};
    \node at (-2,0.5) {\footnotesize\(\widetilde{A^\bu}_Q\)};
    \node at (2,0.5) {\footnotesize\(\widetilde{A^\bu}_Q\)};
  \end{tikzpicture}
  \text{, or} 
    \begin{tikzpicture}[scale=0.59,baseline={([yshift=-1ex]current bounding box.center)}]
    \draw [thick, dashed] (-2,0) -- (0,-2);
    \draw [thick] (2,0)--(0,-2);
    \draw [thick] (0,-2) -- (0,-3);
    \draw [thick,dashed] (1,-1)--(0,0);
    \fill[white!50] (0.5,-1.5) circle (0.26cm);
    \node at (0.59,-1.5) {\tiny{$h'$}};
     \fill[gray!50] (1,-1) circle (0.26cm);
    \draw [thick] (1,-1) circle (0.26cm);
    \node at (1.02,-1) {\tiny$\rho$};
    \fill[gray!50] (0,-2) circle (0.26cm);
    \draw [thick] (0,-2) circle (0.26cm);
    \node at (0.02,-2) {\tiny$\rho$};
    \node at (0,-3.5) {\tiny\(\id_{\C[x,\theta]} \ot p\)};
    \node at (0,0.43) {\footnotesize\(\fp_Q\)};
    \node at (-2,0.43) {\footnotesize\(\fp_Q\)};
    \node at (2,0.5) {\footnotesize\(\widetilde{A^\bu}_Q\)};
  \end{tikzpicture}
  \text{, or} 
    \begin{tikzpicture}[scale=0.59,baseline={([yshift=-1ex]current bounding box.center)}]
    \draw [thick] (-2,0) -- (0,-2);
    \draw [thick] (2,0)--(0,-2);
    \draw [thick] (0,-2) -- (0,-3);
    \draw [thick,dashed] (1,-1)--(0,0);
    \fill[white!50] (0.5,-1.5) circle (0.26cm);
    \node at (0.59,-1.5) {\tiny{$h'$}};
     \fill[gray!50] (1,-1) circle (0.26cm);
    \draw [thick] (1,-1) circle (0.26cm);
    \node at (1.02,-1) {\tiny$\rho$};
    \fill[gray!50] (0,-2) circle (0.26cm);
    \draw [thick] (0,-2) circle (0.26cm);
    \node at (0.02,-2) {\tiny$m$};
    \node at (0,-3.5) {\tiny\(\id_{\C[x,\theta]} \ot p\)};
    \node at (0,0.43) {\footnotesize\(\fp_Q\)};
    \node at (-2,0.43) {\footnotesize\(\widetilde{A^\bu}_Q\)};
    \node at (2,0.5) {\footnotesize\(\widetilde{A^\bu}_Q\)};
  \end{tikzpicture},
\end{equation}
where $m=n_{0,2}$ is the multiplication on $\widetilde{A^\bu}$, and $\rho$ is the action of $\fp_Q$. But again, ${\sf w}_2(\rho)={\sf w}_2(m)=0$ and $\Im(h')\subset \Im(\id_{\C[x,\theta]}\ot h)\subset (v+\sum_{i\geq 2} w_i)\C[x,\theta,\lambda,v,w_2,\ldots]$. Thus, all trees of this form are ultimately killed by $\id_{\C[x,\theta]}\ot p$.

We then obtain that the transferred open-closed homotopy algebra structure on $\fp_Q\oplus A^\bu(\Oa_Y)_Q$ coincides with the one obtained through the pure spinor construction. Moreover, by the general theory of homotopy transfer \cite{Lodval}, the maps of \eqref{eq:AOYtwistretract} will lift to quasi-isomorphisms of open-closed homotopy algebras.
\end{proof}
\subsubsection{Spans of untwisted open-closed homotopy algebras}
\label{sec:spanuntwist}
It should be noted that it was nowhere assumed that $Q\neq0$. Hence, \cref{prop:AOYopenclosed} has the following immediate corollary.
\begin{cor}
  \label{cor:AOYopenclosed}
  $\fp\oplus \widetilde{A^\bu}$ and  $\fp\oplus A^\bu(\Oa_Y)$ are quasi-isomorphic as open-closed homotopy algebras.
\end{cor}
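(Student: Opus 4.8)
The plan is to recognise \Cref{cor:AOYopenclosed} as nothing but the specialisation of \Cref{prop:AOYopenclosed} to the trivial Maurer--Cartan element $Q=0$, so that no genuinely new argument is needed. First I would verify that $Q=0$ really is a Maurer--Cartan element of $\fp$: since $\fp$ is a strict, uncurved $L_\infty$-algebra, every term of the Maurer--Cartan equation contains at least one factor of $Q$ and hence vanishes at $Q=0$, so $0\in\MC(\fp)=Y$. This is the only genuine hypothesis of \Cref{prop:AOYopenclosed} that must be checked for the value $Q=0$.

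Next I would unpack what the twisted objects become at $Q=0$. The deformation $\rho(Q)=Q^\alpha\partial_{\theta^\alpha}+Q^\alpha\Gamma_{\alpha\beta}^\mu\theta^\beta\partial_{x^\mu}$ vanishes identically, so $\Da_Q=\Da$ and $\tilde d_Q=\tilde d$; consequently $A^\bu(\Oa_Y)_Q=A^\bu(\Oa_Y)$ and $\widetilde{A^\bu}_Q=\widetilde{A^\bu}$ as open-closed homotopy algebras, with all the structure maps $\mu_k$, $n_{1,1}$, $n_{0,1}$, $n_{0,2}$ unchanged. Likewise $\ad_Q=\ad_0=0$, so the twisted $L_\infty$-structure $\fp_Q$ reduces to $\fp$. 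Under these identifications the conclusion of \Cref{prop:AOYopenclosed}, namely that $\fp_Q\oplus\widetilde{A^\bu}_Q$ and $\fp_Q\oplus A^\bu(\Oa_Y)_Q$ are quasi-isomorphic as open-closed homotopy algebras, reads verbatim as the assertion of the corollary.

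Finally I would remark that the proof of \Cref{prop:AOYopenclosed} nowhere used $Q\neq 0$: the homological perturbation lemma applied to the retract \eqref{eq:AOYretract2}, the vanishing of the higher deformations and of the higher transferred maps via the weight estimate $\Im(h)\subseteq(v+\sum_{i\geq 2}w_i)\C[x,\theta,\lambda,v,w_2,\ldots]$ together with ${\sf w}_2(\rho)={\sf w}_2(m)=0$, and the matching of the four transferred structures (the $L_\infty$-structure on $\fp$, the binary product, the linear $\fp$-action, and the absence of higher $n'_{p,q}$), all go through word for word at $Q=0$. Hence the corollary follows immediately by reading \eqref{eq:AOYtwistretract} and the subsequent verification with $Q$ set to zero. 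I do not expect any real obstacle here; the only thing to be careful about is confirming that each displayed formula in the proof of \Cref{prop:AOYopenclosed} collapses to its untwisted counterpart when $\rho(Q)$ and $\ad_Q$ are set to zero, which is transparent.
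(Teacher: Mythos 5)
Your proposal is correct and is exactly the paper's argument: the corollary is obtained by observing that the proof of \Cref{prop:AOYopenclosed} nowhere assumes $Q\neq 0$, so setting $Q=0$ (which is trivially a Maurer--Cartan element) collapses all twisted structures to their untwisted counterparts and yields the statement immediately. Your additional verifications that $\rho(Q)$ and $\ad_Q$ vanish at $Q=0$ are a harmless elaboration of the same point.
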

Now, \cref{thm:roofuntwisted} tells us that $\widetilde{A^\bu}$ and $C^\bu(\fn)$ are quasi-isomorphic as $A_\infty$-algebras. Let us sketch how this works, and see that the generalisation to open-closed homotopy algebras follows immediately.

Recall that, as a dg vector space
\begin{equation}
\widetilde{A^\bu}\cong\big( \C[x,\theta,\lambda,v,w_2,\ldots],\,\lambda\pdv{}{\theta}+v\pdv{}{x}-\lambda\theta\pdv{}{x}+d_{\tilde{\ft}}\big).  
\end{equation}
It was observed in \cite{cederwall2023canonical} that, since $\widetilde{A^\bu}$ is semifree it constitutes the Chevalley--Eilenberg cochains of a (nonminimal) $L_\infty$-algebra, $(\widetilde{A^\bu})^!$. As a cochain complex 
\begin{equation}
    \label{eq:koszuldualatilde}
    (\widetilde{A^\bu})^!\cong (\ft \overset{\mu_1}{\to}{\ft}[-1]\overset{0}{\to}\fn),
  \end{equation}
  where $\mu_1\sim d_\alpha[-1]\pdv{}{d_\alpha}+e_\mu[-1]\pdv{}{e_\mu}$. 
  It is thus clear that $\H^\bu((\widetilde{A}^\bu)^!,\mu_1)\cong \fn$ and it was moreover proved that $\fn$, with its $L_\infty$-algebra structure given by restriction, is in fact a minimal model of $\widetilde{A^\bu}$. 

 To see this we simply observe that, since  $\H^\bu((\widetilde{A^\bu})^!,\mu_1)\cong \fn$, there exists (\cf \cite[for example]{Berglund:0909.3485,CHUANG2019130}\footnote{this result is sometimes referred to as the ``tensor trick''.}) a strong multiplicative retract 
 \begin{equation}
    \begin{tikzcd}
       \label{eq:tildeACnretract}
\arrow[loop left]{r}{H} (\widetilde{A^\bu},\lambda\pdv{}{\theta}+v\pdv{}{x}) \arrow[r, shift left=1ex, "{P}"] 
 &  \arrow[l, shift left=1ex, "{I}"]  (C^\bu(\fn),0)  
\end{tikzcd}\,.
\end{equation}
We can choose a decomposition of $\widetilde{A^\bu}\cong C^\bu(\fn) \oplus (x+\theta+\lambda+v)\C[x,\theta,\lambda,v,w_2,\ldots]$, where elements of the last summand always contain at least one power of $x$, $\theta$, $\lambda$, or $v$. $P$ is then the projection onto the first summand, and $I$ the injection. Furthermore, we have, modulo coefficients,  $H=\theta\!\!\overset{(IP,\id)}{\pdv{}{\lambda}}+x\!\!\overset{(IP,\id)}{\pdv{}{v}}$, where the diacritic $(IP,\id)$ invokes the $(IP,\id)$-derivation property (\cf \cref{sec:hpl}). We can now add the rest of the differential on $\widetilde{A^\bu}$, that is, we consider a perturbation of the differential $\lambda\pdv{}{\theta}+v\pdv{}{x}$, of the form
\begin{equation}
  \label{eq:perturbation}
  d\coloneqq-\lambda\Gamma\theta\pdv{}{x}+d_{\tilde{\ft}}.
\end{equation}
By the homological perturbation lemma this induces a differential on $C^\bu(\fn)$, which, by the results of \cite{cederwall2023canonical}, yields the Chevalley--Eilenberg differential $d_\fn$ defining the $L_\infty$-algebra structure on $\fn$. Moreover, by the homological perturbation lemma, the resulting retract
 \begin{equation}
    \begin{tikzcd}
       \label{eq:tildeACnretract4}
\arrow[loop left]{r}{H'} (\widetilde{A^\bu},\tilde{d}) \arrow[r, shift left=1ex, "{P'}"] 
 &  \arrow[l, shift left=1ex, "{I'}"]  (C^\bu(\fn),d_\fn)  
\end{tikzcd}\,
\end{equation}
is a strong multiplicative retract and the deformed maps $I'$ and $P'$ constitute $L_\infty$-quasi-isomorphisms of the underlying $L_\infty$-algebras of $C^\bu(\fn)$ and $\widetilde{A^\bu}$ \cite{Berglund:0909.3485,CHUANG2019130}.

Now, since $\fp\oplus\widetilde{A^\bu}$ is an open-closed homotopy algebra, we can, analogous to the proof of \cref{prop:AOYopenclosed}, extend the deformation retract \eqref{eq:tildeACnretract4} by taking the direct sum on each side with $\fp$. We can then transfer open-closed homotopy algebras along this extended retract to obtain an open-closed homotopy algebra on $\fp\oplus C^\bu(\fn)$. In particular, since the retract \eqref{eq:tildeACnretract4} is strong and multiplicative, the induced $A_\infty$-algebra structure on $C^\bu(\fn)$ coincides with the binary one already defined.
As open-closed homotopy algebras are equivalent to $A_\infty$-algebras over $L_\infty$-algebras (see \cref{sec:openclosed}), the following theorem is then immediate.
\begin{theorem}
There is a span
\label{thm:rooftopenclosed}
    \begin{equation}
        \begin{tikzcd}[row sep = 1 ex]
          \label{eq:roofopenclosed}
          & \widetilde{A^\bu} \ar[rd] \ar[ld] &  \\
                A^\bu(\Oa_Y) & & C^\bu(\fn)
        \end{tikzcd},
      \end{equation}
of quasi-isomorphisms of $A_\infty$-algebras over the (strict) $L_\infty$-algebra $\fp$. 
\end{theorem}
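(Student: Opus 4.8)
The plan is to work entirely in the language of open-closed homotopy algebras and only translate back to $A_\infty$-algebras over $L_\infty$-algebras at the very end, via the equivalence recalled in \cref{sec:openclosed}. Concretely, it suffices to produce, for each of the two legs, a quasi-isomorphism of open-closed homotopy algebras whose closed sector is $\fp$ equipped with its strict (Lie) $L_\infty$-structure. Since this closed-sector structure is then fixed and identical on all three vertices, each vertex is an $A_\infty$-algebra over the strict $L_\infty$-algebra $\fp$, and the two legs assemble into the claimed span.

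The left leg is already in hand: it is exactly the $Q=0$ specialisation \Cref{cor:AOYopenclosed} of \Cref{prop:AOYopenclosed}, which provides a quasi-isomorphism of open-closed homotopy algebras $\fp\oplus\widetilde{A^\bu}\rightsquigarrow\fp\oplus A^\bu(\Oa_Y)$. For the right leg I would begin from the strong multiplicative retract \eqref{eq:tildeACnretract4} relating $\widetilde{A^\bu}$ and $C^\bu(\fn)$, and extend it to a retract between $\fp\oplus\widetilde{A^\bu}$ and $\fp\oplus C^\bu(\fn)$ by taking the direct sum with the trivial retract on $\fp$ (identity maps, zero homotopy), exactly as in the proof of \Cref{prop:AOYopenclosed}. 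I would then transfer the open-closed homotopy algebra structure on $\fp\oplus\widetilde{A^\bu}$ along this extended retract; the homotopy transfer theorem simultaneously furnishes the transferred structure on $\fp\oplus C^\bu(\fn)$ and a quasi-isomorphism of open-closed homotopy algebras realising \eqref{eq:tildeACnretract4} on the open sector.

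It then remains to confirm that the transferred structure is the intended one. Because the homotopy on the $\fp$ summand is zero, no higher closed-sector brackets are generated by transfer and the strict $L_\infty$-structure on $\fp$ is preserved; because \eqref{eq:tildeACnretract4} is multiplicative, the transferred open-sector product on $C^\bu(\fn)$ agrees with the strict cdga product already present. Thus $C^\bu(\fn)$ is genuinely an $A_\infty$-algebra over $\fp$, with open-sector product the strict cdga one and closed sector the Lie algebra $\fp$. I expect the only delicate point to be the bookkeeping: verifying that the transferred linear $\fp$-action on $C^\bu(\fn)$ matches the module structure obtained along the right leg of \Cref{thm:roofuntwisted}, and that the maps produced by the homotopy transfer theorem really intertwine the full open-closed data rather than merely the underlying $A_\infty$- and $L_\infty$-pieces separately. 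Unlike the left leg, I would \emph{not} claim that the higher mixed maps $n'_{p,q}$ with $p\ge 1$, $q\ge 2$ vanish here; they are permitted by the definition of an $A_\infty$-algebra over an $L_\infty$-algebra, and their possible appearance does not obstruct the construction of the span.
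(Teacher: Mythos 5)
Your proposal is correct and follows essentially the same route as the paper: the left leg is taken verbatim from \cref{cor:AOYopenclosed} (the $Q=0$ case of \cref{prop:AOYopenclosed}), and the right leg is obtained by extending the strong multiplicative retract \eqref{eq:tildeACnretract4} by the trivial retract on $\fp$ and transferring the open-closed homotopy algebra structure, with the same observations that the zero homotopy on $\fp$ preserves the strict closed sector and that multiplicativity of the retract makes the transferred product on $C^\bu(\fn)$ the binary one already present. Your decision not to claim vanishing of the higher mixed maps $n'_{p,q}$ on the right leg also matches the paper, which likewise accepts whatever structure the transfer produces there.
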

\subsubsection{\texorpdfstring{$\widetilde{A}^\bu_Q$}{\~A{\_}Q} and \texorpdfstring{$C^\bu(\fn)_Q$}{C(n)\_Q}}
We will here generalise the right leg of \eqref{eq:roofopenclosed} to also allow for arbitrary twists. This yields a quasi-isomorphism of $A_\infty$-algebras over $L_\infty$-algebras between $\widetilde{A^\bu}_Q$ and $C^\bu(\fn)_Q$, where $C^\bu(\fn)_Q$ is $C^\bu(\fn)$ with a particular deformed differential, soon to be defined.

In the previous section we saw that, by considering the perturbation \eqref{eq:perturbation} of the differential on the left side of \eqref{eq:tildeACnretract}, we have an equivalence between $\widetilde{A^\bu}$ and $C^\bu(\fn)$. Now, instead of just considering the perturbation \eqref{eq:perturbation}, what if we also include the term coming from twisting? That is, what if we add the perturbation
\begin{equation}
  \label{eq:twistedperturbation}
  d_Q\coloneqq-\lambda\Gamma\theta\pdv{}{x}+d_{\tilde{\ft}}+ Q\pdv{}{\theta}+Q\Gamma\theta\pdv{}{x},
\end{equation}
to the left side of \eqref{eq:tildeACnretract} and see what it induces on $C^\bu(\fn)$?
\begin{prop}
  The differential induced on $C^\bu(\fn)$ by adding the perturbation \eqref{eq:twistedperturbation} to the left hand side of the retract \eqref{eq:tildeACnretract} is of the form
  \begin{equation}
    \label{eq:twistedfndif}
    d_{\fn}^Q\coloneqq d_\fn + \sum_{n\geq1} P\circ Q^{\alpha_1}\cdots Q^{\alpha_{n}}\pdv{^n}{\lambda^{\alpha_1}\cdots\partial\lambda^{\alpha_{n}}}\circ d_{\ft_1\circlearrowright\fn} \circ I\,,
  \end{equation}
  where $d_{\ft_1\circlearrowright\fn}$ are precisely those terms of $d_{\tilde{\ft}}$ encoding the brackets of $\tilde{\ft}$ which carry at least one element of $\ft_1$ and the rest in $\fn$.
\end{prop}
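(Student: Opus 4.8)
The plan is to apply the homological perturbation lemma (\cref{hpl}) to the strong multiplicative retract \eqref{eq:tildeACnretract} with the perturbation $d_Q$ of \eqref{eq:twistedperturbation}, and then to identify, by a grading-and-degree bookkeeping, precisely which of the resulting ``zigzag'' terms survive the projection. Writing $d_Q=d+\rho(Q)$ with $d=-\lambda\Gamma\theta\pdv{}{x}+d_{\tilde{\ft}}$ the untwisted perturbation \eqref{eq:perturbation} and $\rho(Q)=Q\pdv{}{\theta}+Q\Gamma\theta\pdv{}{x}$, the lemma yields
\[
  d_{\fn}^Q=P\circ\sum_{k\geq0}(d_Q H)^k d_Q\circ I,
\]
which I would organise by the number of powers of $Q$. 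First I would record the effect of each elementary operator on the four ``superspace'' counts $N_\lambda,N_v,N_\theta,N_x$: the homotopy $H=\theta\pdv{}{\lambda}+x\pdv{}{v}$ is the only operator removing a $\lambda$ or a $v$ (producing a $\theta$ or an $x$), while $Q\pdv{}{\theta}$ is the only one removing a $\theta$ and the two $\pdv{}{x}$-operators the only ones removing an $x$. Since $I$ lands in $\C[w]=\Sym^\bu(\fn^*[-1])$ and $P$ projects onto it, each count must begin and end at $0$; in particular the first operator applied to $I(\C[w])$ is forced to be $d_{\tilde{\ft}}$ (the others all require a $\theta$, $x$, $\lambda$ or $v$ to be present).

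For the $Q$-independent part no $Q\pdv{}{\theta}$ is available to remove a $\theta$, so no $\theta$ — hence no $x$, $\lambda$ or $v$ — may ever be produced; this forces $k=0$ and recovers $d_\fn=Pd_{\tilde{\ft}}I$, the pure-$w$ part of $d_{\tilde{\ft}}$, exactly as in the untwisted computation leading to \eqref{eq:tildeACnretract4}. The crucial observation for the $Q$-dependent part is that \emph{every zigzag touching a factor of $v$ vanishes}. Indeed, a $v^\mu$ is cleared either through $v^\mu\xrightarrow{H}x^\mu$ followed by one of the two $\pdv{}{x^\mu}$-operators, or through $d_{\tilde{\ft}}(v^\mu)=\lambda\Gamma^\mu\lambda$; in each case the index $\mu$ ends up on a $\Gamma^\mu$ whose two spinor slots are ultimately evaluated on $Q$, producing a factor $Q\Gamma^\mu Q=\Gamma^\mu_{\alpha\beta}Q^\alpha Q^\beta$, which vanishes because $Q\in\MC(\ft)$. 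Equivalently, this is the statement $\rho(Q)^2=(Q\Gamma^\mu Q)\pdv{}{x^\mu}=0$. Hence only $v$-free zigzags contribute, and in particular no $\pdv{}{x}$-operator is ever used, since an $x$ can only arise from a $v$.

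On the $v$-free branches the only operators in play are the $w$-output, $v$-free pieces of $d_{\tilde{\ft}}$ — namely $d_\fn$ (no $\lambda$) and $d_{\ft_1\circlearrowright\fn}$ (at least one $\lambda$) — together with $H=\theta\pdv{}{\lambda}$ and $Q\pdv{}{\theta}$. Balancing the $\theta$-count shows the number of $H$'s equals the number of $Q\pdv{}{\theta}$'s; combining this with the total operator counts $k+1$ and $k$ forces exactly one application of $d_{\tilde{\ft}}$ per surviving term. For a $Q$-dependent term this single application must be $d_{\ft_1\circlearrowright\fn}$, producing some $n\geq1$ factors of $\lambda$, and the alternating shape of the zigzag then forces the remaining operators to be $n$ successive pairs $Q\pdv{}{\theta}\circ H$. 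Acting on a $\theta$-free monomial one has $Q\pdv{}{\theta}\circ\theta\pdv{}{\lambda}=Q\pdv{}{\lambda}$ up to a term differentiating an absent $\theta$, so iterating gives $(Q\pdv{}{\theta}\,H)^n=(Q\pdv{}{\lambda})^n$ with no spurious coefficient; summing over $n\geq1$ produces $\sum_{n\geq1}P(Q\pdv{}{\lambda})^n d_{\ft_1\circlearrowright\fn}I$, which is \eqref{eq:twistedfndif}.

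The hard part is the second paragraph: establishing uniformly — over all ways a $v$ can be created by $d_{\tilde{\ft}}$ and subsequently cleared — that the index $\mu$ of each cleared $v$ is genuinely contracted into a $Q\Gamma^\mu Q$, so that the Maurer--Cartan identity $Q\Gamma^\mu Q=0$ kills every $\ft_2$-mediated contribution. The repackaging $\rho(Q)^2=0$ encodes this cancellation compactly and is the cleanest route. The remaining work — the degree count pinning down a single $d_{\ft_1\circlearrowright\fn}$, and the verification that the alternating clearings compose to $(Q\pdv{}{\lambda})^n$ with the correct sign and coefficient while respecting the $(IP,\id)$-derivation nature of $H$ — is routine but must be carried through to fix the precise form of \eqref{eq:twistedfndif}.
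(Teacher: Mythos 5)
Your proposal is correct in its conclusion and follows the same overall strategy as the paper (apply the homological perturbation lemma to \eqref{eq:tildeACnretract} and identify which zigzags survive $P$), but the bookkeeping is organised quite differently. The paper proves the operator identity $P\circ(d_QH)^n=P\circ Q^{\alpha_1}\cdots Q^{\alpha_n}\pdv{^n}{\lambda^{\alpha_1}\cdots\partial\lambda^{\alpha_n}}$ by induction \emph{from the left}: once everything to the left of a given $d_QH$ is known to be $P\circ(Q\pdv{}{\lambda})^n$, the branches $x\pdv{}{v}$ of $H$ and $\pm\lambda\Gamma\theta\pdv{}{x}$, $Q\Gamma\theta\pdv{}{x}$ of $d_Q$ die instantly, because they leave behind a $\theta$ or an $x$ that no operator to the left can remove; the Maurer--Cartan equation is never invoked. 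You instead analyse the zigzags globally and dispose of the $v$-mediated branches by arguing that each cleared $v^\mu$ contributes a factor $Q\Gamma^\mu Q=0$. That claim is true (I checked that in every such branch the index $\mu$ does end up contracted into $Q\Gamma^\mu Q$, since the only $\pdv{}{v}$-term of $d_{\tilde\ft}$ is $\lambda\Gamma\lambda\pdv{}{v}$ and the only way to clear the resulting $\lambda$'s and $\theta$'s is through $H$ and $Q\pdv{}{\theta}$), but it is a detour, and you yourself flag it as ``the hard part'' without carrying it through. In fact your own $\theta$/$x$ balance, applied to \emph{all} branches rather than only the $v$-free ones, already closes this gap for free: with $k$ applications of $H$ and $k+1$ of $d_Q$, the $\theta$-balance gives $\#H=\#(Q\pdv{}{\theta})$, which forces the number of $\pdv{}{x}$-operators plus the number of $d_{\tilde\ft}$'s to equal one; since the first operator hitting $I(\C[N])$ must be $d_{\tilde\ft}$, no $\pdv{}{x}$-operator is ever used, so no $v$ is ever successfully cleared and every $v$-touching zigzag is killed by $P$ on purely combinatorial grounds. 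With that observation your argument becomes complete and elementary; as written, it rests on a lemma that is harder to establish than the statement it is meant to prove and that the paper's left-to-right induction renders unnecessary.
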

\begin{proof}
The new differential $d_{\fn}^Q$ is given as a sum of terms 
\begin{align}
d_{\fn}^Q=\sum_{n=0}^\infty d'_n,\qquad d'_n= P (d_Q H)^n d_Q I\;.
\end{align}
The term for $n=0$ gives back the Chevalley-Eilenberg differential $d_\fn$ on $C^\bullet(\fn)$. Indeed,  $d_Q= -\lambda\Gamma\theta \pdv{}{x}+d_\fn+(d_{\tilde{\ft}}-d_\fn)+Q\pdv{}{\theta}+Q\Gamma\theta\pdv{}{x}$. All terms but $d_\fn+Q\pdv{}{\theta}$ are zero under $P$, that is their image lies inside $(x+\theta+v+\lambda)\C[x,\theta,\lambda,v,w_2,\ldots]$, and $Q\pdv{}{\theta}$ composed with $I$ is zero.

Let us now consider the higher corrections $d'_n$, for $n\geq1$. The first correction looks like
\begin{equation}
    d'_1=P\circ(d_QH)\circ d_Q\circ I.
  \end{equation}
  As the image of $I$ will never contain any powers of $x,\theta,v,\lambda$, we can effectively replace the first occurence of $d_Q$ with $d_{\tilde{\ft}}$. Now, crucially, to have something nonzero under $P$ implies there can be no $x,\theta,v,\lambda$ in the expression. By working backwards one can try to identify which terms produce a possible nonzero image under $P$. The only part of $d_Q$ which does not produce any such terms is $Q\pdv{}{\theta}+d_\fn$. Hence
\begin{equation}
\label{eq: d1'}
    d_1'=P\circ(d_\fn+Q\pdv{}{\theta})(\theta\!\!\overset{(IP,\id)}{\pdv{}{\lambda}}+x\!\!\overset{(IP,\id)}{\pdv{}{v}})\circ d_{\tilde{\ft}} \circ I=P\circ(Q\pdv{}{\lambda})\circ d_{\tilde{\ft}}\circ I,
\end{equation}
where we used the fact that the composition $d_\fn H$ always contains $x$ or $\theta$, which is zero under $P$. Similarly, the composition $Q\pdv{}{\theta}\circ x\pdv{}{v}$ always carries an $x$ which is also zero under $P$. Moreover, the $(PI,\id)$-derivation property vanishes in the last step because of the identity $PI=\id_{C^\bu(\fn)}$.

The following lemma computes the general form for the higher terms of $d_\fn^Q$.
\begin{lemma}
\label{lem: higher corrections}
Let $b_n=P\circ (d_Q H)^n$, for $n\geq 1$. Then we have,
\begin{equation}
b_n=P\circ Q^{\alpha_1}\cdots Q^{\alpha_n}\pdv{^n}{\lambda^{\alpha_1}\cdots\partial\lambda^{\alpha_n}}. 
\end{equation}
\end{lemma}
\begin{proof}
  The proof is done by induction. The base case $n=1$ was shown in \eqref{eq: d1'}. Assume now $b_n=P\circ Q^{\alpha_1}\cdots Q^{\alpha_n}\pdv{^n}{\lambda^{\alpha_1}\cdots\partial\lambda^{\alpha_n}}$.
  Then we have, 
\begin{equation}\begin{split}
    b_{n+1}=&P\circ(d_Q H)^{n+1}=p\circ(d_Q H)^n(d_QH)=b_n(d_QH)=\\
    =&P\circ Q^{\alpha_1}\cdots Q^{\alpha_n}\pdv{^n}{\lambda^{\alpha_1}\cdots\partial\lambda^{\alpha_n}}\Big(\!-\!\lambda\Gamma\theta\pdv{}{x}+d_{\tilde{\ft}}+Q\pdv{}{\theta}+Q\Gamma\theta\pdv{}{x}\Big)\circ\Big(\theta\!\!\overset{(IP,\id)}{\pdv{}{\lambda}}+x\!\!\overset{(IP,\id)}{\pdv{}{v}}\Big)\\
    =&P\circ Q^{\alpha_1}\cdots Q^{\alpha_{n+1}}\pdv{^n}{\lambda^{\alpha_1}\cdots\partial\lambda^{\alpha_{n+1}}},
\end{split}
\end{equation}
where the $(PI,\id)$-derivation property vanishes in the last step because of the identity $PI=\id_{C^\bu(\fn)}$.
\end{proof}
By \cref{lem: higher corrections}, we thus have that the total transferred differential takes the form:
\begin{equation}
  \label{eq:deformeddiff}
  d_\fn^Q= d_\fn + \sum_{n\geq1} P\circ Q^{\alpha_1}\cdots Q^{\alpha_{n}}\pdv{^n}{\lambda^{\alpha_1}\cdots\partial\lambda^{\alpha_{n}}}\circ d_{\tilde{\ft}} \circ I.
  \end{equation}
  However, the expression after $d_{\tilde{\ft}}\circ I$ must carry at least one $\lambda$, in order to create something nontrivial, and it cannot contain any $v^\mu$ as these will be killed under $P$. Thus, the only surviving terms are those encoding brackets which take at least one element of $\ft_1$ and the rest in $\fn$. Hence, we retrieve \eqref{eq:twistedfndif}.
\end{proof}

\subsubsection{ Koszul duals of twisted canonical multiplets}
Recall that $\tilde{\ft}\cong \ft\oplus \fn$, as graded vector spaces. Moreover, if $Q$ is a Maurer--Cartan element of $\ft$ then $Q$ is also a Maurer--Cartan element of $\tilde{\ft}$. The twist of $\tilde{\ft}$ by $Q$ is then the $L_\infty$-algebra $(\tilde{\ft}_Q,\mu_k^{\tilde{\ft}_Q})$ defined on the vector space $\tilde{\ft}_Q=\tilde{\ft}$, with brackets
   \begin{equation}
    \label{eq:mutildeftQ}
    \mu_k^{\tilde{\ft}_Q}(x_1,\ldots,x_k)= \sum_{n\geq0}\frac1{n!}\mu_{k+n}^{\tilde{\ft}}(Q,\ldots,Q,x_1,\ldots,x_k).
  \end{equation}
  As $\fn$ forms an $L_\infty$-algebra ideal inside $\tilde{\ft}$ it is clear from \eqref{eq:mutildeftQ} that the underlying graded vector space $\fn\subset\tilde{\ft}_Q$, with its $L_\infty$-algebra structure defined by restriction of \eqref{eq:mutildeftQ}, also forms an ideal inside $\tilde{\ft}_Q$. Let us denote this $L_\infty$-algebra $\fn_Q$. The cochains $C^\bu(\fn_Q)$ of $\fn_Q$ are then the symmetric algebra $\Sym^\bu(\fn^*[-1])$, equipped with a differential $d_{\fn_Q}$. Let us choose a basis $n_a$ for $\fn_Q=\fn$ and let $N^a$ denote the shifted duals of $n_a$, then the differential on $C^\bu(\fn_Q)$ can be written as
  \begin{equation}
    \label{eq:dn_Q}
    d_{\fn_Q}=\sum_{\substack{k\geq 1\\n\geq 0}}Q^{\alpha_1}\cdots Q^{\alpha_n}F^b_{\alpha_1\cdots\alpha_n a_1\cdots a_k}N^{a_1}\cdots N^{a_k}\pdv{}{N^b},
  \end{equation}
  where the $F$'s are, modulo signs and combinatorial factors, the structure constants of $\tilde{\ft}$, so that $F^b_{\alpha_1\cdots\alpha_n a_1\cdots a_k}n_b\sim[d_{\alpha_1},\ldots,d_{\alpha_n},n_{a_1},\ldots,n_{a_n}]$.
  
Let us now see what brackets the extra term of \eqref{eq:twistedfndif} defines. Using the same bases $n_a$ and $N^a$, we have
\begin{equation}
  \label{eq:explicitdceterms}
  d_{\ft_1\circlearrowright\fn} =\sum_{i,k\geq 1}F_{\alpha_1\cdots\alpha_i a_1\cdots a_k}^b\lambda^{\alpha_1}\cdots\lambda^{\alpha_i}N^{a_1}\cdots N^{a_k}\pdv{}{N^b}.
\end{equation}
Substituting \eqref{eq:explicitdceterms} in \eqref{eq:twistedfndif} forces $i=n$ and we get a sum over $n$ and $k$. The result is 
\begin{equation}
    \label{eq:explicitdeformed}
    d_\fn^Q=d_\fn+ \sum_{n,k\geq1}Q^{\alpha_1}\cdots Q^{\alpha_n}F_{\alpha_1\cdots\alpha_n a_1\cdots a_k}^bN^{a_1}\cdots N^{a_k}\pdv{}{N^b},
  \end{equation}
  from which we can read off that the ``new'' $L_\infty$-algebra structure on $\fn$, defined by $d_{\fn}^Q$, is precisely that of $\fn_Q$.
We thus obtain the following corollary.
  \begin{cor}
  The $L_\infty$-algebra Koszul dual to $(C^\bu(\fn),d_{\fn}^Q)$ is the $L_\infty$-algebra $\fn_Q$; the subalgebra ideal of $\tilde{\ft}_Q$ generated by $\fn$.
\end{cor}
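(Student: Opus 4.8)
The plan is to invoke the Koszul duality dictionary recalled in \cref{rem:cochains}, under which a semifree cdga of the form $(\Sym^\bu(V^*[-1]),d)$ is identified with the $L_\infty$-algebra supported on $V$ whose brackets are encoded by the augmentation-preserving square-zero derivation $d$. Since $(C^\bu(\fn),d_\fn^Q)$ is by construction exactly such a semifree cdga on $\Sym^\bu(\fn^*[-1])$, it is automatically Koszul dual to some $L_\infty$-algebra structure on the graded vector space $\fn$; the content of the corollary is the identification of that structure with $\fn_Q$. Because the Chevalley--Eilenberg cochains $C^\bu(\fn_Q)$ live on the \emph{same} underlying algebra $\Sym^\bu(\fn^*[-1])$, it suffices to prove that the two differentials coincide, $d_\fn^Q=d_{\fn_Q}$; faithfulness of the Koszul duality correspondence then yields the claim.

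First I would compare the explicit formula \eqref{eq:explicitdeformed} for $d_\fn^Q$ delivered by the preceding proposition with the formula \eqref{eq:dn_Q} for $d_{\fn_Q}$. The undeformed piece $d_\fn$ appearing in \eqref{eq:explicitdeformed} is precisely the $n=0$ summand of \eqref{eq:dn_Q}, so reabsorbing it recasts \eqref{eq:explicitdeformed} as $\sum_{n\geq 0,\,k\geq 1}Q^{\alpha_1}\cdots Q^{\alpha_n}F^b_{\alpha_1\cdots\alpha_n a_1\cdots a_k}N^{a_1}\cdots N^{a_k}\pdv{}{N^b}$, term-by-term of the same shape as $d_{\fn_Q}$. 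It then remains to match coefficients: the $F$'s of \eqref{eq:dn_Q} are, up to signs and combinatorial factors, the structure constants of $\tilde\ft$ with $n$ legs in $\ft_1$ and $k$ legs in $\fn$, i.e. exactly those produced by inserting $n$ copies of $Q$ into the $(n+k)$-ary bracket via the twisting formula \eqref{eq:mutildeftQ}. The $F$'s of \eqref{eq:explicitdeformed}, on the other hand, come from $d_{\ft_1\circlearrowright\fn}$ of \eqref{eq:explicitdceterms} — the brackets of $\tilde\ft$ with at least one $\ft_1$-leg and the rest in $\fn$ — after the substitution $\lambda^\alpha\mapsto Q^\alpha$ effected by $Q^{\alpha_1}\cdots Q^{\alpha_n}\pdv{^n}{\lambda^{\alpha_1}\cdots\partial\lambda^{\alpha_n}}$. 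Both recipes select the same brackets of $\tilde\ft$, so the differentials agree.

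The one delicate point — and the step I expect to be the main obstacle — is the bookkeeping of the Koszul signs and combinatorial prefactors hidden behind the ``modulo signs and combinatorial factors'' caveat. On the $\fn_Q$ side the twist \eqref{eq:mutildeftQ} carries an explicit $1/n!$ together with $n$ symmetric insertions of $Q$, whereas on the $d_\fn^Q$ side the $n$-fold derivative $\pdv{^n}{\lambda^{\alpha_1}\cdots\partial\lambda^{\alpha_n}}$ acting on the monomial $\lambda^{\alpha_1}\cdots\lambda^{\alpha_n}$ of \eqref{eq:explicitdceterms} generates a compensating $n!$; one must verify that these factors, combined with the Koszul signs fixed by the bidegree $(0,1)$ assigned to $Q^\alpha$ and by the degree shifts inherent in passing between $\fn$ and $\fn^*[-1]$, conspire to produce matching coefficients. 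Since this is precisely the compatibility of the twisting procedure with Koszul duality, I would either carry it out directly in the fixed bases $n_a,N^a$, or appeal to the general functoriality of twisting invoked elsewhere in this section. Once the coefficients are matched, $d_\fn^Q=d_{\fn_Q}$ and hence $\fn_Q$ is the Koszul dual of $(C^\bu(\fn),d_\fn^Q)$, as asserted.
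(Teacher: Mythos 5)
Your proposal is correct and follows essentially the same route as the paper: the paper likewise establishes the corollary by writing $d_{\fn_Q}$ and $d_{\ft_1\circlearrowright\fn}$ in the fixed bases $n_a$, $N^a$, substituting the latter into \eqref{eq:twistedfndif} (which forces $i=n$), and reading off that the resulting differential \eqref{eq:explicitdeformed} coincides with \eqref{eq:dn_Q}. Your extra attention to the $1/n!$ from the twist versus the $n!$ from the iterated $\lambda$-derivatives is a point the paper absorbs into its ``modulo signs and combinatorial factors'' convention for the $F$'s, but it does not change the argument.
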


Now, by the same reasoning as in \cref{sec:spanuntwist}, the homological perturbation lemma tells us there exists a strong multiplicative retract
 \begin{equation}
    \begin{tikzcd}
       \label{eq:tildeACnretract2}
\arrow[loop left]{r}{\tilde{H}} (\widetilde{A^\bu}_Q,\tilde{d}_Q) \arrow[r, shift left=1ex, "{\tilde{P}}"] 
 &  \arrow[l, shift left=1ex, "{\tilde{I}}"]  (C^\bu(\fn_Q),d_{\fn_Q})  
\end{tikzcd}\,,
\end{equation}
along which (after taking the direct sum with the retract identifying $\fp_Q$ with itself) we can transfer open-closed homotopy algebra structures. Since the retract is in fact strong and multiplicative, that is, $\tilde{I}$ and $\tilde{P}$ are (strict) algebra morphisms, we have in particular that the induced $A_\infty$-algebra structure on $C^\bu(\fn)$ is the binary one already defined. We have thus established the following theorem.
\begin{theorem}
There is a span
\label{thm:maintwisted}
    \begin{equation}
        \begin{tikzcd}[row sep = 1 ex]
          \label{eq:rooftwistopenclosed}
          & \widetilde{A^\bu}_Q \ar[rd] \ar[ld] &  \\
                A^\bu(\Oa_Y)_Q & & C^\bu(\fn_Q)
        \end{tikzcd},
      \end{equation}
of quasi-isomorphisms of  $A_\infty$-algebras over the (strict) $L_\infty$-algebra $\fp_Q$. 
\end{theorem}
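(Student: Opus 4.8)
The plan is to build the span by assembling its two legs, each of which is essentially already in place from the preceding results. The left leg, the quasi-isomorphism $\widetilde{A^\bu}_Q \rightsquigarrow A^\bu(\Oa_Y)_Q$, is precisely the content of \cref{prop:AOYopenclosed}: there it was shown that $\fp_Q\oplus\widetilde{A^\bu}_Q$ and $\fp_Q\oplus A^\bu(\Oa_Y)_Q$ are quasi-isomorphic as open-closed homotopy algebras, so no further work is needed for this leg. It thus remains to produce the right leg and to read off the span.

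For the right leg I would proceed exactly as in \cref{sec:spanuntwist} and in the proof of \cref{prop:AOYopenclosed}. Starting from the strong multiplicative retract \eqref{eq:tildeACnretract2} relating $(\widetilde{A^\bu}_Q,\tilde d_Q)$ and $(C^\bu(\fn_Q),d_{\fn_Q})$, I take the direct sum on each side with the trivial retract identifying $\fp_Q$ with itself (identity maps and zero homotopy). This yields a deformation retract of the underlying dg vector spaces of the open-closed homotopy algebras $\fp_Q\oplus\widetilde{A^\bu}_Q$ and $\fp_Q\oplus C^\bu(\fn_Q)$. Along this retract I transfer the open-closed homotopy algebra structure carried by the left-hand side, using homotopy transfer of open-closed homotopy algebras as reviewed in \cref{sec:homotopytransfer}. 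Two features then pin down the transferred structure: since the retract \eqref{eq:tildeACnretract2} is strong and multiplicative, in that $\tilde I,\tilde P$ are algebra morphisms and $\tilde H$ a derivation, the transferred ``open'' $A_\infty$-structure on $C^\bu(\fn_Q)$ collapses to the binary commutative product already present; and since the retract on the $\fp_Q$-summand is trivial, the ``closed'' $L_\infty$-structure is unchanged, namely the strict dg Lie structure $(\fp_Q,\ad_Q,[-,-])$. By the general homotopy transfer theorem for open-closed homotopy algebras \cite{Kajiura2006,Lodval}, the deformed structure maps lift to a quasi-isomorphism of open-closed homotopy algebras $\fp_Q\oplus\widetilde{A^\bu}_Q\rightsquigarrow\fp_Q\oplus C^\bu(\fn_Q)$, giving the right leg.

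Combining the two legs and invoking the equivalence between open-closed homotopy algebras and $A_\infty$-algebras over $L_\infty$-algebras recalled in \cref{sec:openclosed}, both legs become quasi-isomorphisms of $A_\infty$-algebras over the strict $L_\infty$-algebra $\fp_Q$, which is the asserted span. I expect the only genuinely new point, and hence the main obstacle, to be the verification that the transferred open-closed structure on the right leg has the claimed form. In contrast to the left-leg argument, where the auxiliary weight grading ${\sf w}_2$ forced all higher corrections to vanish, here the homotopy $\tilde H$ is genuinely nonzero and the $L_\infty$-brackets of $\fn_Q$ really are higher; so one cannot expect higher maps to disappear, and the content is instead that strong multiplicativity keeps the open part binary while the transfer maps remain quasi-isomorphisms. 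This is exactly what Kajiura and Stasheff's transfer theorem provides, once one notes that the splitting $\fp_Q\oplus C^\bu(\fn_Q)$ is respected by the extended retract.
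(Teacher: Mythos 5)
Your proposal is correct and follows essentially the same route as the paper: the left leg is exactly \cref{prop:AOYopenclosed}, and the right leg is obtained by extending the strong multiplicative retract \eqref{eq:tildeACnretract2} by the trivial retract on $\fp_Q$ and transferring open-closed homotopy algebra structures, with strong multiplicativity keeping the transferred open part binary while the higher $n_{p,q}$ for $p\geq1$ are allowed to survive. The one substantive input you take as given is that the homological-perturbation-induced differential on $C^\bu(\fn)$ really is the Chevalley--Eilenberg differential of the twisted ideal $\fn_Q\subset\tilde{\ft}_Q$ --- the computation the paper carries out via \cref{lem: higher corrections} and the subsequent corollary --- which is precisely what legitimises writing $C^\bu(\fn_Q)$ as the target of \eqref{eq:tildeACnretract2}.
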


\begin{remark}
 Let us briefly also consider $\widetilde{A^\bu}_Q$; this is again a semifree cdga, which means it constitutes the Chevalley--Eilenberg cochains of an $L_\infty$-algebra. The underlying vector space of this $L_\infty$-algebra is again ${\ft}[-1]\oplus \ft\oplus \fn$. From the equation of $\tilde{d}_Q$ we can read off the $L_\infty$-algebra structure;
\begin{equation}
    \tilde{d}_Q= v^\mu\pdv{}{x^\mu}+\lambda^\alpha\pdv{}{\theta^\alpha}-\lambda^\alpha\Gamma_{\alpha\beta}^\mu\theta^\beta\pdv{}{x^\mu}+Q^\alpha\pdv{}{\theta^\alpha}+Q^\alpha\Gamma_{\alpha\beta}^\mu\theta^\beta\pdv{}{x^\mu}+d_{\tilde{\ft}}.
  \end{equation}
  In particular, we see that we have a term $Q^\alpha\pdv{}{\theta^\alpha}$ which does not preserve the augmentation (\cf \cref{rem:curved}), and so
  $(\widetilde{A^\bu}_Q)^!$
  will be a curved $L_\infty$-algebra. In particular, the curvature is
  \begin{equation}
      \mu_0(1)=Q^\alpha d_\alpha[-1].\\
  \end{equation}
\end{remark}

\section*{Acknowledgements}
This work was supported by the Leverhulme Research Project Grant RPG--2021--092. The author would like to thank Charles Young, Hyungrok Kim, Ingmar Saberi, Martin Cederwall, Jakob Palmkvist, and Tommaso Franzini for valuable discussions. Special thanks to Charles Young and Tommaso Franzini for their helpful comments on the draft.

\end{document}